\def\rmlabel{\upshape({\itshape \roman*\,})}
\let\emptyset=\varnothing
\let\setminus=\smallsetminus
\def\moverlay{\mathpalette\mov@rlay}
\def\mov@rlay#1#2{\leavevmode\vtop{   \baselineskip\z@skip \lineskiplimit-\maxdimen
   \ialign{\hfil$\m@th#1##$\hfil\cr#2\crcr}}}
\newcommand{\charfusion}[3][\mathord]{
    #1{\ifx#1\mathop\vphantom{#2}\fi
        \mathpalette\mov@rlay{#2\cr#3}
      }
    \ifx#1\mathop\expandafter\displaylimits\fi}
\newcommand{\quot}[2]{\mathchoice%
{\left.\raisebox{.1em}{$\displaystyle{#1}$}\kern-1pt/\raisebox{-.2em}{$\displaystyle{#2}$}\right.}
{\left.\raisebox{.1em}{${#1}$}\kern-1pt/\raisebox{-.2em}{${#2}$}\right.}
{\left.\raisebox{.1em}{$\scriptstyle{#1}$}\kern-1pt/\raisebox{-.2em}{$\scriptstyle{#2}$}\right.}
{\left.\raisebox{.1em}{$\scriptscriptstyle{#1}$}\kern-1pt/\raisebox{-.2em}{$\scriptscriptstyle{#2}$}\right.}
}
\DeclareFontFamily{U}  {MnSymbolC}{}
\DeclareSymbolFont{MnSyC}         {U}  {MnSymbolC}{m}{n}
\DeclareFontShape{U}{MnSymbolC}{m}{n}{
    <-6>  MnSymbolC5
   <6-7>  MnSymbolC6
   <7-8>  MnSymbolC7
   <8-9>  MnSymbolC8
   <9-10> MnSymbolC9
  <10-12> MnSymbolC10
  <12->   MnSymbolC12}{}
\DeclareMathSymbol{\powerset}{\mathord}{MnSyC}{180}
\newtheorem{fact}       [theorem] {Fact}
\let\epsilon\varepsilon
\let\hat\widehat
\newcommand{\EE}{\mathbb{E}}
\newcommand{\PP}{\mathbb{P}}
\newcommand{\ZZ}{\mathbb{Z}}
\newcommand{\RR}{\mathbb{R}}
\newcommand{\DD}{\mathbb{D}}
\newcommand{\cA}{\mathcal{A}}
\newcommand{\cG}{\mathcal{G}}
\newcommand{\cH}{\mathcal{H}}
\newcommand{\cI}{\mathcal{I}}
\newcommand{\cL}{\mathcal{L}}
\newcommand{\cN}{\mathcal{N}}
\newcommand{\cSN}{\mathcal{SN}}
\newcommand{\cP}{\mathcal{P}}
\newcommand{\cR}{\mathcal{R}}
\newcommand{\cU}{\mathcal{U}}
\newcommand{\opt}{{\rm OPT}}
\newcommand{\soc}{{\rm SC}}
\def\({\left(}
\def\){\right)}  
\def\[{\left[}
\def\]{\right]}
\def\llceil{\left\lceil}
\def\rrceil{\right\rceil}  
\def\:{\colon}
\def\ee{{\rm e}}
\newcommand*\patchAmsMathEnvironmentForLineno[1]{%
\expandafter\let\csname old#1\expandafter\endcsname\csname #1\endcsname
\expandafter\let\csname oldend#1\expandafter\endcsname\csname end#1\endcsname
\renewenvironment{#1}%
{\linenomath\csname old#1\endcsname}%
{\csname oldend#1\endcsname\endlinenomath}}% 
\newcommand*\patchBothAmsMathEnvironmentsForLineno[1]{%
\patchAmsMathEnvironmentForLineno{#1}%
\patchAmsMathEnvironmentForLineno{#1*}}%
\def\PoA{\mathop{{\rm PoA}}\nolimits}
\def\SPoA{\mathop{{\rm SPoA}}\nolimits}
\def\val{\mathop{{\rm w}}\nolimits}
\begin{document}
%\linenumbers %%%%%%%%%%% 
%\doublespace
%\onehalfspace
%\singlespace
%\footskip=28pt

\title{A tight lower bound for an online hypercube \\ 
       packing problem and bounds for prices of \\ 
       anarchy of a related game}

\titlerunning{Bounds for an online hypercube packing and a related
game}  % abbreviated

\author{Y. Kohayakawa\inst{1} \and
  F.K. Miyazawa\inst{2} \and 
   Y. Wakabayashi\inst{1}}

% abbreviated  author list (for running head)
\authorrunning{Y.\ Kohayakawa, F.K. Miyazawa and Y.\ Wakabayashi}

\institute{Institute of Mathematics and Statistics \\
University of S\~ao Paulo, Brazil \\
\email{\{yoshi|yw\}@ime.usp.br}\\
\and
Institute of Computing \\
University of Campinas, Brazil\\
\email{fkm@ic.unicamp.br}
}

\maketitle

\begin{abstract}
  
We prove a tight lower bound on the asymptotic performance
ratio~$\rho$ of the \emph{bounded space online $d$-hypercube bin packing
problem}, solving an open question raised in 2005.
In the classic $d$-hypercube bin packing problem, we are given a sequence
of $d$-dimensional hypercubes 
% $h_1, h_2, \ldots, h_n$, each item $h_i$ with a side length $s(h_i)
% < 1$ in each dimension, 
and we have an unlimited number of bins, each of which is a
$d$-dimensional unit hypercube. The goal is to pack (orthogonally) the
given hypercubes into the minimum possible number of bins, in such a
way that no two hypercubes in the same bin overlap. The \emph{bounded
  space online \hbox{$d$-hypercube} bin packing problem} is a variant
of the $d$-hypercube bin packing problem, in which the hypercubes arrive
\emph{online} and each one must be packed in an open bin without the knowledge
of the next hypercubes.  Moreover, at each moment, only a constant
number of open bins are allowed (whenever a new bin is used, it is
considered open, and it remains so until it is considered closed, in which case,
it is not allowed to accept new hypercubes).  Epstein and van
Stee~[SIAM J.\ Comput.~35 (2005), no.~2, 431--448] showed that~$\rho$
is $\Omega(\log d)$ and $O(d/\log d)$, and conjectured that it is
$\Theta(\log d)$. We show that $\rho$ is in fact $\Theta(d/\log d)$.
  To obtain this result, we elaborate on some ideas presented by those
  authors, and go one step further showing how to obtain better
  (offline) packings of certain special instances for which one knows
  how many bins any bounded space algorithm has to use. Our main
  contribution establishes the existence of such packings,
  for large enough~$d$, using probabilistic arguments.  Such packings
  also lead to lower bounds for the prices of anarchy of the selfish
  $d$-hypercube bin packing game.  We present a lower bound of
  $\Omega(d/\log d)$ for the pure price of anarchy of this game, and
  we also give a lower bound of~$\Omega(\log d)$ for its strong price
  of anarchy.

\end{abstract}

\shortdate
\yyyymmdddate
\def\today{\number\year/\number\month/\number\day}
\settimeformat{ampmtime}
\date{\today, \currenttime}

%\tableofcontents

\section{Introduction and main results}
\label{sec:introduction}

The bin packing problem is an iconic  problem in combinatorial
optimization that has been largely investigated from many
different viewpoints. In special, it has served as a proving ground for new
approaches to the analysis of approximation algorithms. It is one of
the first problems for which approximation algorithms were proposed in
the beginning of seventies, and also ideas to prove lower bounds for
online algorithms and probabilistic analysis first
appeared~\cite{CoffmanGJ97}. We believe that the technique we present
in this paper is novel and contributes with new ideas that may possibly be
incorporated into this area of research.

We prove bounds for two variants of the bin packing problem, in which
the items to be packed are $d$-dimensional cubes (also referred to as
\emph{$d$-hypercubes} or simply hypercubes, when the dimension is
clear).  More precisely, we show results for the \emph{online bounded
  space $d$-hypercube bin packing problem} and the
\emph{selfish hypercube bin packing game}.  Before we state
our results in the next section, we define these problems and mention some
known results.

The \emph{$d$-hypercube bin packing problem} ($d$-CPP) is defined as
follows.  We are given a list $L$ of items, where each item
$h\in L$ is a $d$-hypercube of side length $s(h)\leq1$, and an unlimited
number of bins, each of which is a unit $d$-hypercube. 
The goal is to find a packing $\cP$ of the items of~$L$ into a minimum
number of bins. 
More precisely, we have to assign each item $h$ to a bin, and specify
its position $(x_1(h),\ldots,x_d(h))$ in this bin.  As usual, we
consider that each bin is defined by the region $[0,1]^d$, and thus,
we must have $0\leq x_i(h)\leq 1-s(h)$, for
$i=1,\ldots,d$. Additionally, we must place the items parallel to the
axes of the bin and guarantee that items in the same bin do not
overlap.  The \emph{size} of the packing $\cP$ is the number of
\emph{used} bins (those with at least one item assigned to it). 
Throughout this paper, the bins are
always assumed to be unit hypercubes of the same dimension of the
items that have to be packed.

The $d$-CPP is in fact a special case of the \textit{$d$-dimensional
  bin packing problem} ({\rm {\it d}-BPP}), in which one has to pack
$d$-dimensional parallelepipeds into $d$-dimensional unit bins.  For
$d=1$, both problems reduce to the well known \textit{bin packing problem}.

In the \emph{online} variant of $d$-CPP, the hypercubes arrive online
and must be packed in an open bin (without the knowledge of the next
hypercubes). The {\it online bounded space} variant of the $d$-CPP is
a more restricted variant of the online $d$-CPP. Whenever a new empty
bin is used, it is considered an open bin and it remains so until it
is considered closed, after which it is not allowed to accept other
hypercubes. In this variant, during the packing process, only a
constant number of open bins is allowed.  The corresponding problem or
algorithm in which the whole list of items is known beforehand is
called offline.

As it is usual, for bin packing problems, we consider the asymptotic
performance ratio to measure the quality of the algorithms. For an
algorithm $\cA$, and an input list~$L$, let $A(L)$ be the number of bins
used by the solution produced by algorithm $A$ for the list $L$, and
let $\opt(L)$ be the minimum number of bins needed to pack $L$. The
\emph{asymptotic performance ratio} of algorithm $A$ is defined as
\begin{equation}
\cR_A^{\infty} = \limsup_{n\rightarrow\infty}
               \sup_L\left\{  \frac{A(L)}{\opt(L)}: \opt(L)=n\right\}.
\end{equation}
Given a packing problem $\Pi$, the {\it optimal asymptotic performance
  ratio} for $\Pi$ is defined as 
\begin{equation}
  \cR_{\Pi}^{\infty}=\inf\left\{\cR_A^{\infty}: A\mbox{ is an
    algorithm for }\Pi\right\}.
\end{equation}

%%%%%%%%%%%%%%%%%%%

Many results have been obtained for the online $d$-BPP and
  $d$-CPP problems
  (see~\cite{Vliet92,Seiden02,BaloghBG10,HeydrichS2016a,HeydrichS16b,BaloghBDEL17arx-a}). Owing to space limitation, we mention only results for the online bounded
  space versions of these problems.
For the online bounded space $1$-BPP,
  Csirik~\cite{Csirik89} presented an algorithm with asymptotic
  performance ratio at most $\Pi_{\infty}\approx 1.69103$, shown to be
  an optimal online bounded space algorithm by
  Seiden~\cite{Seiden01}. For the online bounded space $d$-BPP, $d\geq 2$, a lower bound of~$(\Pi_{\infty})^d$ follows from~\cite{CsirikV93}; Epstein and van Stee~\cite{EpsteinS05} showed that this bound is tight.

For the online bounded space $d$-CPP, Epstein and van
Stee~\cite{EpsteinS05} showed that its asymptotic performance ratio is
$\Omega(\log d)$ and $O(d/\log d)$, and conjectured that it is
$\Theta(\log d)$. They also showed an optimal algorithm for this
problem, but left as an interesting open problem to determine its
asymptotic performance ratio.  One of our main results builds
  upon their work and shows a lower bound that matches the known upper
  bound. 
\begin{theorem}
  \label{thm:lwbd_prbsa}
  The asymptotic performance ratio of the online bounded space $d$-hypercube
  bin packing problem is~$\Omega(d/\log d)$.
\end{theorem}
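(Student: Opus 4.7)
The plan is to follow the adversarial framework already used by Epstein and van Stee for bounded space online bin packing, and to improve the ratio from $\Omega(\log d)$ to $\Omega(d/\log d)$ by constructing substantially better \emph{offline} packings of the same adversarial instances, via the probabilistic method.

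For the online lower bound, I would construct a list $L$ consisting of items grouped into a large number of size classes, where class $i$ contains $n_i$ items of side length slightly larger than $1/(i+1)$ for $i$ in a suitable range (so that exactly $i^d$ items of class $i$ fit in a bin, and items of different classes cannot share a bin economically). The items are presented in order of decreasing size. Since a bounded space algorithm has only a constant number $K$ of open bins, after sufficiently many items of class $i+K$ have arrived the bins containing items of class $i$ must already have been closed. Within a single closed bin, items of a given class are packed at density at most $1/2$ of the volume of their class (in the relevant sense used by Epstein--van Stee), so that the total number of bins used by any bounded space online algorithm is at least some quantity $M$ that can be read off directly from the $n_i$'s, matching the Epstein--van Stee lower bound.

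The main contribution, and the heart of the proof, is an \emph{offline} packing of the same list $L$ using only $O(M\log d / d)$ bins. The idea is to design a single ``canonical bin'' in which one places items of many different size classes simultaneously, so that the combined volume is a constant fraction of the bin, rather than the $\Theta(\log d / d)$ fraction one obtains when items of different classes live in separate bins. To prove that such a canonical bin exists, I would place the chosen items at independent uniformly random admissible positions inside~$[0,1]^d$ and bound, via volume estimates, the expected number of overlapping pairs; choosing the multiplicities of each class carefully, this expectation can be made less than the total number of items, so that an alteration argument produces a valid placement realising the target density. Replicating this canonical bin $O(M\log d/d)$ times then packs essentially all items of $L$, with the remaining items absorbed into a lower-order number of extra bins.

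The main obstacle is the probabilistic step. In high dimension the geometry is delicate: a large item of side near $1/2$ has very little positional freedom, so the random placement of large items is concentrated and gives little ``help'' in avoiding collisions, while small items are plentiful and contribute the dominant volume but risk pairwise overlaps. The argument must trade off the number of items per class placed in the canonical bin against the probability of pairwise overlap, and the $d/\log d$ factor is precisely the outcome of an optimized choice of these parameters. A secondary difficulty is verifying that the resulting construction indeed beats all Epstein--van Stee-style lower bounds simultaneously (so that the ratio $\mathrm{Alg}(L)/\opt(L)$ is really of order $d/\log d$), which will involve a careful bookkeeping of the per-class contributions to both~$M$ and to the offline packing.
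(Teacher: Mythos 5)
Your high-level strategy matches the paper's: both start from the Epstein--van Stee adversarial framework and then improve the \emph{offline} packing of the adversarial instance by exhibiting a single bin that simultaneously hosts many items from many size classes, so that its ``weight'' $\sum_k \nu_k/(k-1)^d$ is $\Theta(d/\log d)$. The construction of such a bin is indeed the heart of the result. However, the way you propose to construct it is fundamentally different from the paper's, and I believe your route does not work.

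The gap is in the probabilistic step. You propose to place items at independent uniformly random admissible positions in $[0,1]^d$ and bound the expected number of overlapping pairs. For two cubes of side $s=(1+\epsilon)/k$ with corners independently uniform in $[0,1-s]^d$, the overlap probability per coordinate is $\approx 2s$ (a two-sided constraint $|x-y|<s$), so $\PP(\text{overlap})\approx(2s)^d$, \emph{not} $s^d$. With $n_k$ same-class items, the expected number of overlapping pairs is $\approx\binom{n_k}{2}(2/k)^d$; for a first-moment alteration to leave a constant fraction of the items intact one needs $n_k\lesssim(k/2)^d$. But to obtain per-class weight $\Theta(1)$ one needs $n_k=\Theta((k-1)^d)$, which exceeds $(k/2)^d$ by a factor of roughly $(2(k-1)/k)^d$. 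For the decisive range $k=\Theta(d/\log d)$ this factor is nearly $2^d$, so the multiplicities you could afford give total weight that is exponentially small in $d$, not $\Omega(d/\log d)$. Refinements such as the Lov\'asz Local Lemma do not rescue this: the expected number of neighbours of a single item is itself $\gg1$ in the relevant regime, so no alteration or local-lemma argument can keep a $\Theta(1)$ fraction of the desired multiplicity. The same $2^d$ loss also appears in cross-class overlaps, which your write-up does not address at all.

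The paper sidesteps this entirely by \emph{not} randomizing positions. Each item of class $k$ is placed on an explicit $(k-1)$-near-grid, at a cell indexed by a word $w\in[k]^d$, with the crucial twist that in each coordinate the ``$k$-th slot'' is reserved at the far end of $[0,1]$ (the \emph{gapped} condition). The grid structure makes same-class overlaps impossible by construction. Cross-class overlaps are then avoided by requiring the languages $L_k\subset[k]^d$ and $L_{k'}$ to be \emph{separated}: for every pair of words there is a coordinate where one sits in an interior cell and the other sits in the far-end cell, forcing disjointness. The only randomness is in choosing, for each $k$, a set $F_k\subset[d]$ of coordinates and a large subset of ``good'' words; the probabilistic content is a simple first-moment count showing that few words are ``$\ell$-bad'', plus a Chernoff bound (for the sets $F_k$). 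None of this resembles random placement of items in the bin. So while your framing of what needs to be proved is correct, the specific probabilistic mechanism you propose for the key lemma is the wrong tool, and a grid-plus-combinatorics argument of the kind the paper uses (or something equally structured) appears to be necessary.

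A smaller point: in the online part, the factor of $2$ in the paper's analysis does not come from a ``density at most $1/2$'' claim within a bin; it comes from feeding $2MN$ copies of the canonical bin so that each size class on its own forces at least $MN_{\hat k}\nu_k+M$ bins, absorbing the slack of $M$ open bins. Also, the order of presentation of the size classes is immaterial for the argument; ``decreasing size'' is not needed.
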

In view of the previous results~\cite{EpsteinS05}, we have that
  the asymptotic performance ratio of the online bounded space
  $d$-hypercube bin packing problem is  $\Theta(d/\log d)$. 
  Results on lower and upper bounds for $d\in\{2,\ldots,7\}$ have also
  been obtained by Epstein and van Stee~\cite{EpsteinS07}.
%---------------------

The technique that we use to prove the above theorem can also be used
to obtain lower bounds for a game theoretic version of the $d$-CPP
problem, called \emph{selfish $d$-hypercube bin packing game}.

% explicacao do jogo
This game starts with a set of $d$-hypercubes arbitrarily packed into unit
bins. Each of these hypercubes is (controlled by) a player. For
simplicity, in the game context, we will use the terms hypercube, item
and player in an interchangeable manner. For a game with $n$ items, a
\emph{configuration} is a vector $p = (p_1, \ldots, p_n)$, where $p_i$
indicates in which bin item $i$ is packed. (Equivalently, a
configuration is a packing of the items into bins.)
The \emph{cost} of an item is defined as the ratio between its volume
and the total occupied volume of the respective bin. In this game,
an item can migrate to another bin only when its cost decreases. 
Players may act selfishly by changing their strategy (that is, moving to another bin)
to minimize their costs. 
% %
For a given game configuration~$p$, its \emph{social cost}, denoted
by~$\soc(p)$, is the total cost paid by the players (which is
precisely the number of used bins).  The \emph{optimal social
  goal} is a game configuration of minimum social cost, which we
denote by $\opt(L)$.

An important concept in game theory is the Nash equilibrium~\cite{Nash51}.
In the selfish hypercube bin packing game, a \emph{(pure) Nash
  equilibrium} is a stable packing where no player can reduce his cost
by unilaterally changing his strategy (that is, moving to another
bin), while the strategies of all other players remain unchanged.
The pure Nash equilibrium may not be resilient to the action of
coalitions, as it does not assume that players negotiate and cooperate
with each other. Aumann~\cite{Aumann59} introduced the concept of
\emph{strong Nash equilibrium} in coalitional game theory; in this
case, a group of players may agree to coordinate their actions in a
mutually beneficial way. A \emph{strong Nash equilibrium} is a game
configuration where no group of players can reduce the cost of each of
its members by changing strategies together, while non-members
maintain their strategies.

Throughout the paper, the Nash equilibrium is considered
only in the setting of pure strategies (for pure strategies, a player
chooses only one strategy at a time, while for mixed strategies, a
player chooses an assignment of probabilities to each pure strategy).
Given a game $G$, we denote by $\cN(G)$ (resp. $\cSN(G)$) the set of
configurations in Nash equilibrium (resp. strong Nash equilibrium).

To measure the quality of an equilibrium, Koutsoupias and
Papadimitriou~\cite{KoutsoupiasP99} proposed a measure
in a game-theoretic framework that nowadays is known as the
\textit{price of anarchy} (resp. \textit{strong price of anarchy}),
which is the ratio between the worst social cost of a Nash equilibrium
(resp. strong Nash equilibrium) and the optimal social cost. The price
of anarchy measures the loss of the overall performance due to the
decentralized environment and the selfish behavior of the players.
As it is common for bin packing problems, for bin packing games one
also considers asymptotic price of anarchy. 
The \emph{(asymptotic) price of anarchy} of a class $\cG$ of games is
defined as
\begin{equation}
  \PoA (\cG) := \limsup_{m\rightarrow\infty\;\;} \sup_{G\in\cG,\; \opt(G)=m\; } \max_{p \in \cN(G)} \frac{\soc(p)}{m}. 
\end{equation}
The \emph{(asymptotic) strong price of anarchy} of a class $\cG$ of
games, denoted $\SPoA (\cG)$, is defined analogously, considering only configurations that are
strong Nash equilibria. 

We are interested in the case $\cG$ is the class of the \emph{selfish
 $d$-hypercube bin packing games}, with the natural cost function
(proportional model) we have defined. (Note that, other cost functions
can also be defined for bin packing games.)  We will prove bounds
for the asymptotic prices of anarchy of this class of games. The
corresponding measures will be denoted by $\PoA(d)$ and $\SPoA(d)$,
where $d$ indicates the dimension of the items in the game. Although
we may not mention explicitly, the prices of anarchy considered are
always asymptotic. 

The case $d=1$ of this game was first investigated by
Bil\`o~\cite{Bilo06}, who referred to it as selfish bin packing
game. He proved that this game always converges to a pure Nash
equilibrium and proved that $\PoA (1) \in [1.6,\,1.666]$.  Yu and
Zhang~\cite{YuZ08} improved this result to
$\PoA(1) \in [1.6416, \, 1.6575]$.  Epstein and
Kleiman~\cite{EpsteinK11} obtained (independently) the same lower
bound and improved the upper bound to $1.6428$; they also proved that
$\SPoA(1) \in [1.6067,\, 1.6210]$. Very recently, Epstein, Kleiman and
Mestre~\cite{EpsteinKM16} showed that $\SPoA(1) \approx 1.6067$.  For
$d=1$, Ma et al.~\cite{MaDHTYZ13}, obtained results considering
another cost function. The case $d=2$ was first investigated by
Fernandes~et~al.~\cite{FernandesFMW12}. They showed
in~\cite{FernandesFMW17} that $\PoA(2) \in [2.3634,\, 2.6875]$ and
$\SPoA(2) \in[2.0747, \, 2.3605]$. For a survey on bin packing games
with selfish items, we refer the reader to Epstein~\cite{Epstein13}.

Our second set of results concern lower and upper bounds  for
$\PoA(d)$ and $\SPoA(d)$. 

\begin{theorem}
  \label{thm:PoA_lwbd}
  Let $\PoA(d)$ be the price of anarchy of the selfish
  $d$-hypercube bin packing game. There is an absolute constant~$d_0$
  such that, for all~$d\geq d_0$, we have
  \begin{equation}
    \label{eq:PoA_lwbd}
    \PoA(d)\geq{d\over5\log d}.
  \end{equation}
%  holds for all~$d\geq d_0$.
\end{theorem}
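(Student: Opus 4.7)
My plan is to leverage the same hard instance constructed for Theorem~\ref{thm:lwbd_prbsa}. Its proof produces, for large $d$ and large $N$, a list $L=L(d,N)$ of $d$-hypercubes together with (i) an offline packing of $L$ into at most $N$ bins (the probabilistic dense packing) and (ii) the certificate that any bounded space online algorithm must open $\Omega(dN/\log d)$ bins on $L$. For Theorem~\ref{thm:PoA_lwbd} I would exhibit a concrete packing $p^\star$ of $L$ that attains the latter bin count and show it is a pure Nash equilibrium. Combined with (i), this gives $\soc(p^\star)/\opt(L) \geq d/(5\log d)$ for $d$ large enough, as desired.

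The packing $p^\star$ is the canonical ``spread'' packing used implicitly in the online lower bound: bins are segregated by size class, with items of side $s_i$ grouped into blocks of the maximum density that a bounded space algorithm is forced to attain. For $p^\star$ to be in Nash equilibrium, I need that, for every item $h$ of side $s_i$ in a bin $B$ and every other bin $B'$ in which $h$ physically fits, the migration inequality
\[
\vol(B') + s_i^d \;\le\; \vol(B)
\]
holds. I would verify this in two regimes. First, I would choose the sides $s_i$ in geometric progression (as in the proof of Theorem~\ref{thm:lwbd_prbsa}) so that an item of side $s_i$ does not fit into any bin of $p^\star$ that is already near-maximally packed with items of its own class of side $\le s_i$; this rules out migration into ``denser'' bins. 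Second, for the remaining directions of potential migration, I would augment each bin of $p^\star$ with a small number of auxiliary filler items from smaller classes so that, bin by bin, the occupied volumes are calibrated to make the displayed inequality valid. The extra fillers cost only a bounded factor in the overall ratio, which is absorbed by the constant $5$ in the denominator.

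Once the Nash property of $p^\star$ is established, the comparison
\[
\frac{\soc(p^\star)}{\opt(L)} \;\ge\; \frac{c\, dN/\log d}{N} \;\ge\; \frac{d}{5\log d}
\]
(valid for a suitable absolute constant $c$ and $d\ge d_0$) is immediate from~(i) and~(ii), yielding the theorem.

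The main obstacle will be the simultaneous validation of the Nash migration conditions across all $\Theta(\log d)$ size classes: small items naturally wish to join bins of larger items (any strict increase in the recipient bin's volume is an improvement), while larger items may wish to migrate into under-utilized bins of smaller items whenever they fit. The first direction must be blocked by raising every bin's occupied volume with carefully chosen filler items; the second must be blocked by geometric size relations that prevent the larger items from physically fitting. The delicate point is to ensure that the fillers required for the first do not create room that enables the second, and this balancing act is what forces the loss from a bare $d/\log d$ to $d/(5\log d)$ in the stated bound.
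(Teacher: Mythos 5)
Your high-level plan matches the paper's: build a dense packing $\cU$ of a single unit bin with large weight, take $N$ copies to obtain $\cP$ with $\opt \leq N$, rearrange the same items into a size-segregated packing $\cP'$ with $N\cdot\val(\cU)$ bins, and show $\cP'$ is a pure Nash equilibrium. However, there is a genuine gap in the way you propose to establish the Nash property, and the fix you sketch (filler items) is both unnecessary and unsound.

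The crucial observation you are missing is that the segregated packing does not need any calibration at all: it is \emph{automatically} a Nash equilibrium once each bin is a homogeneous packing of type $\cH_k^d(\epsilon)$, i.e.\ contains exactly $(k-1)^d$ copies of $Q_k^d(\epsilon)$ and nothing else. The paper's Lemma~\ref{lem:an_equilibrium} proves this via the elementary inequality of Proposition~\ref{prop:an_ineq},
\begin{equation*}
  \Bigl(1-\tfrac{1}{k}\Bigr)^d + \tfrac{1}{\ell^d} \;<\; \Bigl(1-\tfrac{1}{\ell}\Bigr)^d
  \qquad (2\leq k<\ell,\ d\geq2),
\end{equation*}
which says that a bin full of the \emph{larger} cubes (side $\sim 1/k$) has strictly \emph{smaller} occupied volume than a bin full of the \emph{smaller} cubes (side $\sim 1/\ell$), even after the small cube is added. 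Your stated intuition that ``small items naturally wish to join bins of larger items'' is therefore backwards for this family of packings: the destination bin's occupied volume is too low for the move to be profitable, and the reverse move (large cube into a bin of small cubes) is physically impossible because the bin is already full up to the gap left by $\epsilon$. So the migration inequality $\vol(B')+s_i^d\leq\vol(B)$ holds for free, with nothing to tune.

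Your proposed filler-item workaround would actually break the argument. Filler items are players too: once you insert small auxiliary cubes into various bins, you must re-verify the Nash condition for \emph{them}, and you must also account for them both in $\soc(p^\star)$ and in $\opt$, since they enlarge the instance. None of this is addressed, and it is unclear it can be made to work while preserving the ratio. Also, the constant $5$ in the bound is not absorbing any filler slack; it comes from the probabilistic density factor $10/11$ in Lemma~\ref{lem:sep_langs} together with the threshold $S=\lceil 2d/(9\log d)\rceil$, via $\val(\cU)\geq(10/11)(S-1)\geq d/(5\log d)$. There is essentially no room left in that constant to pay for extra bins or extra volume.

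Finally, a minor point: you do not need the full online hard instance $\cI$ with the $2MN$ replication and the parameter $M$; that machinery is there to defeat bounded-space algorithms. For the PoA lower bound it suffices to take $N=\prod_{k\in K(\cU)}(k-1)^d$ copies of $\cU$ so that the per-size counts divide $(k-1)^d$ exactly, letting every bin of $\cP'$ be a full homogeneous bin.
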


We remark that our proof of Theorem~\ref{thm:PoA_lwbd}, presented in
Section~\ref{sec:pfofThm.PoA_lwbd}, may be adapted to prove the following statement:
\textsl{for any~$\epsilon>0$ there is~$d_0=d_0(\epsilon)$ such that,
  for any~$d\geq d_0$, we have~$\PoA(d)\geq(1/4-\epsilon)d/\log d$.}

\begin{theorem}
  \label{thm:SPoA_lwbd}
  Let $\SPoA(d)$ be the strong price of anarchy of the selfish
  $d$-hypercube bin packing game. There is an absolute constant~$d_0$
  such that, for all~$d\geq d_0$, we have
  \begin{equation}
    \label{eq:SPoA_lwbd}
    \SPoA(d)\geq{\log d}.
  \end{equation}
 \end{theorem}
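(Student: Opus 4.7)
The plan is to construct, for each sufficiently large $d$, an instance $L$ of the selfish $d$-hypercube bin packing game together with a strong Nash configuration $P$ for which $\soc(P)\ge(\log d)\,\opt(L)$. I would use $K=\lfloor\log_2 d\rfloor$ types of items: for each $k\in\{1,\ldots,K\}$, introduce $n_k$ items of side $s_k=1/2^k+\varepsilon_k$ of type~$k$, where $\varepsilon_k>0$ is chosen small enough that $\lfloor 1/s_k\rfloor=2^k-1$ (so that $c_k:=(2^k-1)^d$ is the maximum number of type-$k$ items fitting in a single bin). With $n_k$ chosen as a suitable multiple of $c_k$, the candidate strong Nash $P$ has each bin maximally filled with items of a single type, giving a total of $n_k/c_k$ bins per type and social cost $\soc(P)=\Theta(KN)$ for an appropriate parameter~$N$. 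In parallel, I would exhibit an explicit packing $P^*$ using only $O(N)$ bins by combining items of many types per bin via a nested dyadic layout; this is volume-feasible since $\sum_k V_k=\sum_k(1-2^{-k})^d$ is bounded by a constant in the regime $k\le\lfloor\log_2 d\rfloor$.

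The main step is to verify that $P$ is a strong Nash equilibrium. This reduces to a \emph{volumetric lemma}: for every multiset of items from $L$ that fits geometrically in a single bin, the total volume is at most $V_{k^*}:=c_{k^*}s_{k^*}^d$, where $k^*$ is the largest type index appearing in the multiset. Granted this lemma, the strong Nash property is immediate. Indeed, in any coalition $C$ and feasible coalitional deviation where each member $i\in C$ moves to a destination bin $B_i$, look at any such destination $B$ and let $k^*=\max\{k_i:i\in C,\,B_i=B\}$; the lemma gives $V_B^{\mathrm{new}}\le V_{k^*}$. Now any type-$k^*$ member of $C$ in $B$ comes from a pure type-$k^*$ source bin of volume exactly $V_{k^*}$, so its cost does not strictly decrease, and the deviation is not improving.

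The principal obstacle is the volumetric lemma. Here the $\varepsilon_k$-perturbation plays the crucial role: items of different scales interlock suboptimally, so placing any item of type $k<k^*$ within a bin blocks strictly more than its ``dyadic share'' of type-$k^*$ slots. I would proceed by induction on the number of distinct types present, using a geometric analysis of how a type-$k$ item partitions the complementary region into slabs of width slightly less than $1/2^k$, within which fewer type-$k^*$ items fit than the na{\"\i}ve volume accounting suggests. A secondary challenge is constructing the near-optimal packing $P^*$: assembling the prescribed number of items of each type in one bin requires a careful recursive layout, perhaps leveraging the probabilistic packings established in the proof of Theorem~\ref{thm:lwbd_prbsa}.
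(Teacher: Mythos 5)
Your high-level plan matches the paper's: build an instance whose near-optimal packing uses few bins, and exhibit a strong Nash configuration in which every bin is homogeneously filled with $(2^k-1)^d$ copies of a single size; the ratio is then $\Theta(\log d)$. But there is a genuine gap in the crucial step — verifying that the homogeneous configuration is a strong Nash equilibrium — and your formulation of the item sizes makes this gap harder, not easier, to close.

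You take sides $s_k=1/2^k+\varepsilon_k$ with $\varepsilon_k$ chosen independently for each $k$, and then reduce the strong-equilibrium property to a ``volumetric lemma'' asserting that any geometrically feasible multiset has total volume at most $c_{k^*}s_{k^*}^d$ for $k^*$ the largest type present. You flag this lemma as the principal obstacle and offer only an inductive sketch about ``interlocking'' and ``slabs''; as written this is not a proof, and the lemma also does not obviously apply in the form you invoke it (in a coalition deviation, a destination bin may retain non-coalition items of a type larger than any coalition member's, so taking $k^*$ over the coalition members alone does not bound the bin's volume). The paper avoids all of this by using a \emph{single} $\epsilon$: sides are exactly $(1+\epsilon)/2^t$, so any smaller side \emph{divides} any larger side. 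That divisibility is what powers the proof: let $h$ be the smallest coalition member, moving from a fully packed type-$k(h)$ bin $B$ to bin $B'$. One first argues that every item in $B'$ after the deviation has side $\ge s(h)$ (otherwise either $h$ is not smallest or $B'$ was already full of strictly smaller items and $h$ could not have entered). Each such item, having side an integer multiple of $s(h)$, can be tiled by cubes of side $s(h)$ of equal total volume; but then the refined $B'$ contains at most $(2^{k(h)}-1)^d$ cubes of side $s(h)$, so its occupied volume is at most that of $B$, and $h$'s cost has not strictly decreased — contradiction. This is Lemma~\ref{lem:a_S_equilibrium}. Your independent $\varepsilon_k$'s destroy the integer-ratio property that makes this refinement possible, which is precisely why you end up needing the unproved volumetric lemma.

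Two smaller issues. First, $K=\lfloor\log_2 d\rfloor$ is slightly too many scales: the probabilistic construction of the good packing (Lemma~\ref{lem:sep_langs2+}) only delivers types $2,4,\dots,2^{S'-1}$ with $S'=\lceil\log_2 d-\log_2\log d-3\rceil$, because the smallest cubes must still be large enough (side $\Omega(\log d/d)$) for the random ``separated gapped language'' argument to yield $|L_t|\ge\tfrac{10}{11}(t-1)^d$; taking $K=S'-1$ types still gives $\val(\cU)\ge\log d$, which is all that is needed. Second, setting $n_k$ so that each type fills exactly $N$ bins in the Nash configuration is not the right parametrization: the optimal packing $\cU$ from the packing lemma contains only $\nu_k(\cU)\approx\tfrac{10}{11}(k-1)^d$ copies of each type (not the full $(k-1)^d$), so one should take $n_k=N\nu_k(\cU)$; then the Nash configuration uses $N\sum_k\nu_k(\cU)/(k-1)^d=N\val(\cU)$ bins while the optimum uses $N$ bins (one per copy of $\cU$), which is exactly Lemma~\ref{lem:SPoA_from_good2+}. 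In short: replace the independent $\varepsilon_k$ by a single $\epsilon$, replace the volumetric lemma by the tiling/refinement argument, and calibrate the multiplicities to the probabilistic packing — then the outline you sketch becomes the paper's proof.
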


 The proof of Theorem~\ref{thm:SPoA_lwbd} uses arguments similar to
 those used in the proof of Theorem~\ref{thm:PoA_lwbd} and is
 therefore omitted (see Appendix~\ref{sec:pfofThm.strong_PoA_lwbd}).
We also prove that the price of anarchy of the selfish
$d$-hypercube bin packing game is at most~$2^d$ (see Appendix~\ref{sec:further}). 
We believe the probabilistic technique used to obtain the lower bounds
in Theorems~\ref{thm:lwbd_prbsa}, \ref{thm:PoA_lwbd}
and~\ref{thm:SPoA_lwbd} is novel and promising for obtaining lower
bounds for other packing problems and games.

\section{Notation, special packings and central lemmas}
\label{sec:preliminaries}
The open $d$-hypercubes~$Q_k^d(\epsilon)$ defined below will be
crucial in what follows.

\begin{definition}
  \label{def:Q_k^d(gepsilon)}
  Let~$d\geq2$ be an integer.  For all integer~$k\geq2$ and~$0<\epsilon\leq1$,
  let
  \begin{equation}
    \label{eq:Q_k^d(epsilon).def}
    Q_k^d(\epsilon)=(1+\epsilon)\(0,{1\over
      k}\)^d=\(0,{1+\epsilon\over k}\)^d
    =\left\{x\in\RR:0<x<{1+\epsilon\over k}\right\}^d\subset[0,1]^d
  \end{equation}
  be the open $d$-hypercube of side length~$(1+\epsilon)/k$
  `based' at the origin.
\end{definition}

For convenience, given~$\epsilon>0$ and a positive integer~$q$, we
write~$q_{-\epsilon}$ for~$q/(1+\epsilon)$.  The quantity~$\epsilon$
will often be clear from the context, and in those cases we simply
write~$q_-$ for~$q_{-\epsilon}$.  Note that, for instance, we
have
\begin{equation}
  \label{eq:Q_k_-^d}
  Q_k^d(\epsilon)=\(0,{1\over k_-}\)^d.
\end{equation}

In what follows, we are interested in certain types of packings~$\cU$
of hypercubes into a unit bin.  If a packing~$\cP$ of hypercubes is
made up of packings~$\cU_1,\dots,\cU_N$, with each~$\cU_i$ being a packing
into a unit bin, then we write~$\cP=(\cU_1,\dots,\cU_N)$, and
denote by~$|\cP|$ the number of bins~$N$ in~$\cP$.

\begin{definition}[Packings of type~$\cH_k^d(\epsilon)$
  and~$\cH^d(\epsilon)$] 
  \label{def:cH}
  Let~$d\geq2$ be fixed.  For any integer $k\geq2$
  and~$0<\epsilon\leq1/(k-1)$, a packing~$\cU$ of $(k-1)^d$ copies
  of~$Q_k^d(\epsilon)$ into a unit bin is said to be a \textit{packing
    of type~$\cH_k^d(\epsilon)$}.  A packing~$\cP=(\cU_1,\cU_2,\dots)$
  is said to be of \textit{type~$\cH^d(\epsilon)$} if for each~$i$
  there is some~$k$ such that~$\cU_i$ is a packing of
  type~$\cH_k^d(\epsilon)$.
\end{definition}

In the definition above, the upper bound on~$\epsilon$ guarantees
that~$(k-1)^d$ copies of~$Q_k^d(\epsilon)$ \textit{can} be packed into
a unit bin (and hence~$\cH_k^d(\epsilon)$ exists): it suffices to
notice that, under that assumption on~$\epsilon$, we have
$(k-1)(1+\epsilon)/k\leq1$.

  Packings of type~$\cH_k^d(\epsilon)$ and~$\cH^d(\epsilon)$ are
  called \textit{homogeneous packings}.  They will be important for us
  because they are Nash equilibria (see
  Lemma~\ref{lem:an_equilibrium}), and also because they can be used
  to create instances for which any bounded space algorithm performs
  badly (following ideas of Epstein and van Stee~\cite{EpsteinS05,EpsteinS07}).

\subsubsection{Two packing lemmas.}
\label{sec:packing}

For the next definition, suppose~$\DD$ is a given set of integers, and
$\epsilon$ is a positive real number.

\begin{definition}[Packings of type~$(1+\epsilon)\DD^{-1}$]
  \label{def:(1+epsilon)/ZZ}
  A packing~$\cU$ of $d$-hypercubes into a unit bin is of
  \textit{type~$(1+\epsilon)\DD^{-1}$} if, for every member~$Q$
  of~$\cU$, there is some integer~$k\in\DD$ such that~$Q$ is a copy
  of~$Q_k^d(\epsilon)$.
\end{definition}

  In what follows, we shall restrict to packings~$\cU$ of
  type~$(1+\epsilon)\DD^{-1}$, where $\DD$ is one of the following
  sets: (a)~$\DD=\ZZ_{\geq2}=\{k\in\ZZ\:k\geq2\}$ or
  (b)~$\DD=\ZZ_{2^+}$, where ~$\ZZ_{2^+}$ denotes the
  set~$\{2^i\:i\geq1\}$.  
  Following~\cite{EpsteinK11,EpsteinKM16}, we consider~$\DD=\ZZ_{2^+}$ to deal with strong
  Nash equilibria (see Appendix~\ref{sec:pfofThm.strong_PoA_lwbd}).
 
Let~$\cU$ be a packing of type~$(1+\epsilon)\DD^{-1}$ for
some~$\DD\subset\ZZ_{\geq2}$ and~$\epsilon>0$.  Let
\begin{equation}
  \label{eq:K(U)_def}
  K(\cU)=\{k\in\DD\:\cU\text{ contains a copy of }Q_k^d(\epsilon)\}
\end{equation}
and
\begin{equation}
  \label{eq:k_max_def}
  k_{\max}(\cU)=\max\{k\:k\in K(\cU)\}. 
\end{equation}
For every~$k\in K(\cU)$, let 
\begin{equation}
  \label{eq:nu_k(U)_def}
%  \nu_k(\cU)=\big|\{Q\in\cU\:Q\text{  is a copy of }Q_k^d(\epsilon)\}\big|.
  \nu_k(\cU)\mbox{ be the total number of copies of
  }Q_k^d(\epsilon)\mbox{ in }\cU.
\end{equation}
Clearly, we have~$0\leq\nu_k(\cU)\leq(k-1)^d$ for every~$k$ (recall
that we suppose~$\epsilon>0$).  Finally, we define the \textit{weight} of~$\cU$
as
\begin{equation}
  \label{eq:val_def}
  \val(\cU)=\sum_{k\in K(\cU)}(k-1)^{-d}\nu_k(\cU).
\end{equation}
We shall be interested in packings~$\cU$ with large weight.  In
  that direction, we prove the following two technical results that
  are the core of our contribution. The first is essential to derive
  the lower bound for the online bounded space $d$-hypercube bin
  packing problem (Theorem~\ref{thm:lwbd_prbsa}) and a lower bound
  for~$\PoA$~(Theorem~\ref{thm:PoA_lwbd}); the second is essential to
  derive a lower bound for~$\SPoA$~(Theorem~\ref{thm:SPoA_lwbd}).  We
  remark that the technique of using weight functions in the analysis
  of packing algorithms dates back to the seventies
  (see~\cite{EpsteinS05} and the references therein).

\begin{lemma}[Packing lemma~A]
  \label{lem:large_value_pack}
  There is an absolute constant~$d_0$ for which the following holds
  for any~$d\geq d_0$.  Let
  \begin{equation}
    \label{eq:S_def}
    S=\llceil2d\over9\log d\rrceil.
  \end{equation}
  The unit bin admits a packing~$\cU$ of type
  $(1+S^{-2})\ZZ_{\geq2}^{-1}$ with~$k_{\max}(\cU)=S$ and with
  \begin{equation}
    \label{eq:large_value_pack}
    \val(\cU)\geq{d\over5\log d}.
  \end{equation}
\end{lemma}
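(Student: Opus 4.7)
The plan is to apply the probabilistic method, motivated by the following volumetric observation. If one could geometrically realize all $(k-1)^d$ copies of~$Q_k^d(\epsilon)$ for every $k\in\{2,\ldots,S\}$ simultaneously, the resulting packing would have weight exactly $S-1\approx 2d/(9\log d)$, well above the target $d/(5\log d)$. The total volume occupied would be
\begin{equation*}
  \sum_{k=2}^S\(\tfrac{(k-1)(1+\epsilon)}{k}\)^d\;\le\;S\,(1-1/S)^d(1+\epsilon)^d\;\le\;2S\,e^{-d/S}\;=\;O\!\(d^{-7/2}/\log d\),
\end{equation*}
using $d/S\sim(9/2)\log d$ and $d/S^2\to 0$; so volumetrically such a packing is easily feasible. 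The obstruction is purely geometric, since grid-aligned cubes at different scales may overlap heavily. Thus the goal is to show that the volumetric slack can be converted into a geometric arrangement achieving at least $(9/10)(S-1)$ of the ideal weight.

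To realize this, I would randomize as follows. For each $k\in\{2,\ldots,S\}$ and each grid point $p\in G_k$, include the cube $p+Q_k^d(\epsilon)$ independently with probability~$\pi_k$, obtaining a tentative random collection~$\cU^*$ with expected weight $\sum_k\pi_k$. Define an alteration: for each overlapping pair of included cubes at levels $k<l$, remove the size-$l$ cube, producing a valid packing~$\cU$. Letting $N_{k,l}$ denote the number of pairs $(p,q)\in G_k\times G_l$ whose cubes overlap, a per-coordinate count (the overlap structure in one dimension yields $k+l-\gcd(k,l)$ intersecting pairs) gives $N_{k,l}\leq((k-1)(\lfloor l/k\rfloor+1))^d$, and therefore
\begin{equation*}
  E[\val(\cU)]\;\ge\;\sum_{k=2}^S\pi_k\;-\;\sum_{2\le k<l\le S}\pi_k\pi_l\,\frac{N_{k,l}}{(l-1)^d}.
\end{equation*}
The task is then to tune the $\pi_k$'s so that the first sum is at least $d/(5\log d)$ while the subtracted term is a strict fraction of it, after which the probabilistic method produces a deterministic packing with $\val(\cU)\ge d/(5\log d)$.

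The main obstacle lies in controlling the alteration term: for close levels ($l/k$ near~$1$) the per-coordinate count approaches~$2$, so the naive bound $N_{k,l}/(l-1)^d$ carries a factor of~$2^d$, which dwarfs any reasonable~$\sum\pi_k$. Overcoming this will require a refinement such as (a)~replacing uniform alteration by random-permutation greedy insertion, giving each candidate cube a survival probability of order $1/(c+1)$ where~$c$ is its total conflict count, so that high-conflict cubes are penalised individually rather than pairwise; (b)~using biased probabilities $\pi_k$ that decay with~$k$ to damp the exponential factor; or (c)~exploiting the perturbation $\epsilon=1/S^2$ to slightly offset the grids~$G_k$ and~$G_l$, avoiding the degenerate alignments that produce the worst $N_{k,l}$. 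Once one such refinement is carried out and the expected weight bound is verified, the lemma follows by the probabilistic method.
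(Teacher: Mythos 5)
Your proposal correctly identifies the target (roughly $(9/10)(S-1)$ of the ``ideal'' weight) and, crucially, you yourself flag the fatal obstacle: for consecutive scales $k$ and $l=k+1$, the overlap count $N_{k,l}/(l-1)^d$ carries a factor $\approx 2^d$, which swamps any choice of inclusion probabilities. But the proposal stops there. None of the three suggested repairs is carried out, and I do not believe any of them would close the gap as stated. For~(a), a random-permutation greedy insertion gives a cube at level~$k$ survival probability $\approx 1/(c+1)$ where its conflict count $c$ is itself of order $2^d$ (every cube at a nearby level~$l$ overlaps $\approx 2^d$ cubes at level~$k$), so survival probabilities become exponentially small in~$d$ and the expected weight collapses. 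For~(b), damping $\pi_k$ enough to beat $2^d$ makes the first sum exponentially small too. For~(c), the perturbation $\epsilon=S^{-2}$ only shifts the grids globally; it does not change the fact that a level-$l$ cube necessarily straddles $\approx 2^d$ cells of the coarser level-$k$ grid. So there is a genuine gap: the alteration/deletion route appears structurally doomed because it tries to repair overlaps \emph{after} independent random placement, at which point the $2^d$ combinatorial blow-up is already baked in.

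The paper avoids this entirely by never creating the overlaps in the first place. Instead of placing cubes at arbitrary grid positions and deleting conflicts, it restricts for each~$k$ to a ``gapped'' language $L_k\subset[k]^d$ (in each coordinate, either index $k-1$ or index $k$ is forbidden, which makes same-level cubes automatically disjoint and frees up a thin slab near the boundary), and it enforces a ``separated'' condition between $L_k$ and $L_{k'}$ (for every pair $w\in L_k$, $w'\in L_{k'}$ with $k<k'$, some coordinate has $w_i<k<k'=w_i'$), which forces cross-level disjointness without any deletion. The randomness enters only in choosing a near-orthogonal family of coordinate sets $F_2,\dots,F_S\subset[d]$ of size $\lceil d/2\rceil$ with pairwise intersections $<7d/26$ (a Chernoff bound), and then in a first-moment argument showing that at most a $1/11$ fraction of candidate words in $([k]\setminus\{k-1\})^{F_k}$ are ``bad'' (fail to hit the value~$k$ on $F_k\setminus F_\ell$ for some $\ell<k$). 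This gives $|L_k|\geq(10/11)(k-1)^d$ for every $k\leq S=\lceil 2d/(9\log d)\rceil$ \emph{with all cross-level conflicts avoided by construction}, hence $\val(\cU)\geq(10/11)(S-1)\geq d/(5\log d)$. In short: the probabilistic method is used to produce a conflict-free combinatorial scaffold, not to clean up conflicts after the fact; this is the step your proposal is missing and would need to discover.
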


\begin{lemma}[Packing lemma B]
  \label{lem:large_value_pack2+}
  There is an absolute constant~$d_0$ for which the following holds
  for any~$d\geq d_0$.  Let
  \begin{equation}
    \label{eq:S'_def.1}
    S'=\llceil\log_2d-\log_2\log d-3\rrceil
  \end{equation}
  and~$\epsilon=2^{-2(S'-1)}$.  The unit bin admits a
  packing~$\cU$ of type $(1+\epsilon)\ZZ_{2^+}^{-1}$
  with~$k_{\max}(\cU)=2^{S'-1}$ and with~$\val(\cU)\geq\log d$.  
\end{lemma}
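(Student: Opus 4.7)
The plan is to give an explicit deterministic construction of a packing~$\cU$ satisfying the three conditions. For each $i\in\{1,\dots,S'-1\}$, include in~$\cU$ every cube of type~$2^i$ placed at a canonical grid position $\vec{j}\in\{0,\dots,2^i-2\}^d$ that has at least one coordinate equal to~$2^i-2$. Geometrically this selects the ``outer shell'' of the maximal canonical grid of type-$2^i$ cubes, and the shells at different levels nest cleanly inside one another inside the unit bin. It is immediate that $k_{\max}(\cU)=2^{S'-1}$, and cubes of the same type lie on the canonical grid and are therefore pairwise disjoint.

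The key geometric step is to verify non-overlap across types. For $i<i'$, pick any type-$2^{i'}$ cube $C'\in\cU$; by the selection rule, some coordinate~$l^*$ satisfies $j'_{l^*}=2^{i'}-2$, so the open interval of $C'$ along axis~$l^*$ is $\bigl((1-2/2^{i'})(1+\epsilon),(1-1/2^{i'})(1+\epsilon)\bigr)$. Any type-$2^i$ cube in $\cU$ has $j_{l^*}\leq 2^i-2$, so its open interval along axis~$l^*$ ends at most at $(1-1/2^i)(1+\epsilon)$. Since $i<i'$ forces $1/2^i\geq 2/2^{i'}$, the type-$2^i$ interval ends before the type-$2^{i'}$ interval begins on axis~$l^*$, so the two cubes are disjoint in $\RR^d$. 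Finally, each selected cube fits in the unit bin because the containment condition reduces to $\epsilon(2^i-1)\leq 1$, which follows at once from $\epsilon=2^{-2(S'-1)}$ and $i\leq S'-1$.

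For the weight, the number of positions included at level~$i$ equals $(2^i-1)^d-(2^i-2)^d$, so the level-$i$ contribution to $\val(\cU)$ is exactly $1-(1-1/(2^i-1))^d$. The plan is to bound $(1-1/(2^i-1))^d\leq\exp(-d/(2^i-1))$ and combine with the estimate $2^i-1\leq 2^{S'-1}\leq d/(8\log d)$ (which follows from the definition of~$S'$) to deduce $d/(2^i-1)\geq 8\log d$, so each contribution is at least $1-d^{-8}$. Summing over $i=1,\dots,S'-1$ then yields $\val(\cU)\geq(S'-1)(1-d^{-8})$, and since $S'-1\geq\log_2 d-\log_2\log d-4$ grows like $\log_2 d$, which exceeds $\log d$ by a constant factor once $d$ is sufficiently large, we conclude $\val(\cU)\geq\log d$ for every $d$ above some absolute threshold~$d_0$.

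The only delicate point is the cross-type non-overlap argument, but the ``maximum-coordinate'' selection rule is designed precisely so that a single axis~$l^*$ alone certifies disjointness, after which everything else reduces to routine estimates. In contrast to Lemma~A, whose much larger target $d/(5\log d)$ appears to require a genuine probabilistic selection of cubes, the logarithmic target of Lemma~B is comfortably met by the explicit nested-shell construction described above.
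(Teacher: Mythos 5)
Your proof is correct, and it takes a genuinely different route from the paper's. The paper proves Lemma~B by running the probabilistic machinery a second time: it states a variant Lemma~\ref{lem:sep_langs2+} (many large, separated, gapped $t$-languages for $t\in T(S')=\{2,\ldots,2^{S'-1}\}$), notes its proof is ``very similar'' to that of Lemma~\ref{lem:sep_langs}, and then routes through the analogues of Lemmas~\ref{lem:indeed_packs} and~\ref{lem:good_exists}. You instead give an explicit deterministic nested-shell construction, and verify pairwise disjointness of the hypercubes directly from the coordinate geometry rather than via the separated-language formalism. The two crucial observations that make your route work, and which are worth stressing, are these. First, because types are powers of two, for $i<i'$ one automatically has $2^{i'}\geq 2\cdot 2^i$, so the single inequality $1-1/2^i\leq 1-2/2^{i'}$ certifies disjointness on the designated axis; this is exactly what fails for the consecutive-integer types of Lemma~A, which is why Lemma~A genuinely needs the random construction while Lemma~B does not. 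Second, since $2^{S'-1}\leq d/(8\log d)$, each level's shell captures all but a $d^{-8}$-fraction of the $(2^i-1)^d$ grid positions, so each of the $S'-1\sim\log_2 d$ levels contributes essentially a full unit of weight, which is what gets you past $\log d$ (the one-designated-coordinate idea of Remark~\ref{rem:warm-up}, by contrast, would only yield $\sum_i 1/(2^i-1)=O(1)$ here). One cosmetic point: your construction is not literally an instance of the paper's ``gapped and separated'' language framework, since you never place a cube at the top grid position $k$ on any axis, so the paper's separation condition ($w'_i=k'$ for some $i$) is vacuously violated; but your direct interval argument replaces that framework entirely, so nothing is lost. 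You also get a marginally better constant than the paper's $10/11$, though the target bound does not need it.
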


\section{Proofs of Theorems~\ref{thm:lwbd_prbsa} and~\ref{thm:PoA_lwbd}}
\label{sec:pfofThm.PoA_lwbd}
To prove Theorem~\ref{thm:lwbd_prbsa}, one can use
Lemma~\ref{lem:large_value_pack} to produce suitable instances that
are `hard' for algorithms for the online bounded space $d$-hypercube
bin packing problem; see, e.g.,~\cite[Lemma~2.3]{EpsteinS05}.
A detailed proof is given in Appendix~\ref{sec:pfofThm.lwbd_prbsa}.

To prove Theorem~\ref{thm:PoA_lwbd}, we shall use
Lemma~\ref{lem:large_value_pack} and the next two lemmas, the proofs
of which are presented in Appendix~\ref{app:pfofThm.PoA_lwbd}.

\begin{lemma}
  \label{lem:an_equilibrium}
  Let~$d\geq2$ and~$\epsilon>0$ be given.  Any
  packing~$\cP=(\cU_1,\cU_2,\dots)$ of \textit{type~$\cH^d(\epsilon)$}
  is a Nash equilibrium.
\end{lemma}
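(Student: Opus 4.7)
The plan is to fix an arbitrary item $h$ in some bin $\cU_i$ of type $\cH_{k_i}^d(\epsilon)$ and an arbitrary target bin $\cU_j$ of type $\cH_{k_j}^d(\epsilon)$ in~$\cP$, and show that $h$ cannot strictly reduce its cost by unilaterally moving to~$\cU_j$. (Moving~$h$ to an empty bin gives new cost~$1\geq1/(k_i-1)^d$, so that case is trivial and need not be considered further.) Each of the $(k-1)^d$ items in a type-$\cH_k^d(\epsilon)$ bin has volume $((1+\epsilon)/k)^d$, so every player there pays cost~$1/(k-1)^d$. If $h$ were to migrate to~$\cU_j$, its new cost would become
\[
\frac{((1+\epsilon)/k_i)^d}{(k_j-1)^d((1+\epsilon)/k_j)^d + ((1+\epsilon)/k_i)^d}
 = \frac{1}{k_i^d\bigl((k_j-1)/k_j\bigr)^d + 1},
\]
and the move is strictly profitable if and only if $\bigl((k_i-1)/k_i\bigr)^d - \bigl((k_j-1)/k_j\bigr)^d < 1/k_i^d$.

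I would then split on the sign of $k_j - k_i$. When $k_j < k_i$, I show that the above inequality fails. By monotonicity of $((k-1)/k)^d$ in~$k$, it suffices to check the tightest case $k_j = k_i - 1$; setting $a = k_i - 1$, the desired inequality $((k_i-1)/k_i)^d - ((k_j-1)/k_j)^d \geq 1/k_i^d$ rewrites, after clearing $k_i^d$, as $a^d - (a - 1/a)^d \geq 1$ for every integer $a \geq 1$ and $d \geq 2$. Factoring~$a^d$ and using the elementary bound $(1 - 1/a^2)^d \leq 1 - 1/a^2$ (valid since $1/a^2 \in [0,1]$ and $d \geq 1$), the left-hand side is at least $a^{d-2} \geq 1$. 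Hence $h$ has no strict incentive to move whenever $k_j < k_i$.

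When $k_j \geq k_i$ the cost inequality is strict, so $h$ would improve if it could fit in~$\cU_j$; the obstruction must therefore be geometric. The crucial tool is the following lemma, which I would prove separately: \emph{at most $(k-1)^d$ pairwise disjoint open axis-aligned $d$-hypercubes of side $>1/k$ fit inside~$[0,1]^d$.} The proof is a lattice argument. Let $\Lambda = \{1/k, 2/k, \ldots, (k-1)/k\}^d \subset (0,1)^d$, so $|\Lambda| = (k-1)^d$. Any open cube $C \subset [0,1]^d$ of side $>1/k$ projects onto each coordinate axis as an open subinterval of~$[0,1]$ of length~$>1/k$, which cannot be contained in any of the length-$1/k$ gaps between consecutive points of $\{0, 1/k, \ldots, (k-1)/k, 1\}$; hence it must meet $\{1/k, \ldots, (k-1)/k\}$, and so~$C$ contains a point of~$\Lambda$. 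Since pairwise disjoint cubes pick distinct lattice points, the bound follows. Applying the lemma with $k = k_j$: both~$h$ (of side $(1+\epsilon)/k_i \geq (1+\epsilon)/k_j > 1/k_j$) and each item already in~$\cU_j$ (of side $(1+\epsilon)/k_j > 1/k_j$) are open cubes of side~$>1/k_j$, so inserting~$h$ would place $(k_j-1)^d + 1$ such cubes in the unit bin, exceeding the allowable maximum.

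The main obstacle is the geometric lemma, since Definition~\ref{def:cH} allows $\cU_j$ to be \emph{any} packing of $(k_j-1)^d$ copies of $Q_{k_j}^d(\epsilon)$ (not merely the canonical grid), and the non-fitting of~$h$ must hold uniformly over every legitimate side length $(1+\epsilon)/k_i$ with $k_i \leq k_j$; the lattice-point argument supplies both uniformities at once. Combining the two cases shows no player has a profitable unilateral deviation, so $\cP$ is a Nash equilibrium.
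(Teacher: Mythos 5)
Your proof is correct, and its algebraic core matches the paper's. The paper reduces to showing that a copy of~$Q_\ell^d(\epsilon)$ in a type-$\cH_\ell^d(\epsilon)$ bin does not gain by moving to a type-$\cH_k^d(\epsilon)$ bin with~$k<\ell$ (Proposition~\ref{prop:an_ineq}), handling only the case in your notation~$k_j<k_i$; it proves the inequality by reducing the exponent to~$d=2$ and expanding, while you clear denominators to~$a^d-(a-1/a)^d\geq1$ and use Bernoulli's inequality. Both manipulations are valid and of comparable simplicity. The genuine difference is that the paper never addresses why moving into a bin of equal-size or smaller hypercubes ($k_j\geq k_i$) is not a profitable deviation --- it simply asserts that the single case treated ``suffices,'' implicitly taking the geometric infeasibility for granted. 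You make this explicit with a clean lattice-point pigeonhole argument ($(k-1)^d$ is the maximum number of pairwise disjoint open cubes of side strictly greater than~$1/k$ inside~$[0,1]^d$), which is exactly the fact needed, and which works uniformly over every admissible arrangement in Definition~\ref{def:cH}, not just the canonical grid. So your proof is strictly more complete than the paper's: same approach for the cost inequality, plus a self-contained proof of the feasibility obstruction that the paper omits.
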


\begin{lemma}
  \label{lem:PoA_from_good}
  If~$\cU$ is a packing of $d$-hypercubes into a unit bin of type
  $(1+\epsilon)\ZZ_{\geq2}^{-1}$, where
  \begin{equation}
    \label{eq:epsilon_upbdd}
    0<\epsilon\leq{1\over k_{\max}(\cU)-1},
  \end{equation}
  then~$\PoA(d)\geq\val(\cU)$.
\end{lemma}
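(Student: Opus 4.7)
The plan is to turn the packing~$\cU$ into a sequence of bin packing instances on which the ratio between a suitable Nash equilibrium and the optimum tends to~$\val(\cU)$. Concretely, I would take many copies of the items of~$\cU$ as input, use~$\cU$ itself (repeated) as a packing that witnesses a small value of~$\opt$, and then re-pack the very same items into homogeneous bins, invoking Lemma~\ref{lem:an_equilibrium} to certify that this re-packing is a Nash equilibrium.

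In detail, let $K=K(\cU)$ and pick a positive integer~$N$ divisible by $\prod_{k\in K}(k-1)^d$; let $L_N$ be the list consisting of $N\nu_k(\cU)$ copies of~$Q_k^d(\epsilon)$ for each $k\in K$.  Placing $N$ copies of the packing~$\cU$ into $N$ separate unit bins gives a packing of~$L_N$, so $\opt(L_N)\leq N$.  Now I regroup the items by type: for each $k\in K$, the $N\nu_k(\cU)$ hypercubes of side~$(1+\epsilon)/k$ are partitioned into $N\nu_k(\cU)/(k-1)^d$ groups of $(k-1)^d$ items, and each such group is arranged as a packing of type~$\cH_k^d(\epsilon)$ in a fresh unit bin.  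The hypothesis $\epsilon\leq 1/(k_{\max}(\cU)-1)$ implies $\epsilon\leq 1/(k-1)$ for every $k\in K$, which (as noted just after Definition~\ref{def:cH}) is exactly what guarantees that $\cH_k^d(\epsilon)$ exists.  The resulting packing~$p$ of~$L_N$ therefore has total size
\[
\soc(p)=\sum_{k\in K}\frac{N\nu_k(\cU)}{(k-1)^d}=N\val(\cU),
\]
and is of type~$\cH^d(\epsilon)$ by construction.

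Lemma~\ref{lem:an_equilibrium} then certifies that $p$ is a pure Nash equilibrium of the game on~$L_N$, so $\soc(p)/\opt(L_N)\geq N\val(\cU)/N=\val(\cU)$.  Since the total volume of~$L_N$ grows linearly in~$N$, we have $\opt(L_N)\to\infty$ as $N\to\infty$ along the chosen subsequence, and hence these instances contribute to the limsup in the definition of $\PoA(d)$, yielding $\PoA(d)\geq\val(\cU)$.  The argument is elementary modulo Lemma~\ref{lem:an_equilibrium}; the only point requiring some care is the divisibility choice for~$N$, which ensures that every homogeneous bin is fully packed---a partially filled $\cH_k^d(\epsilon)$-style bin would typically fail to be in equilibrium, because an item in such a bin could move to join another partially filled bin and strictly decrease its cost.
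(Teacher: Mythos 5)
Your proof is correct and follows essentially the same route as the paper's: take many copies of~$\cU$ as the instance (witnessing a small optimum), regroup the items by size into fully occupied homogeneous bins of type~$\cH_k^d(\epsilon)$, and invoke Lemma~\ref{lem:an_equilibrium} to certify the resulting packing is a Nash equilibrium, yielding the ratio~$\val(\cU)$. The paper fixes~$N=\prod_{k\in K(\cU)}(k-1)^d$ rather than an arbitrary multiple of it, but that choice plays exactly the role you identify (ensuring every homogeneous bin is complete), and your explicit remark that~$\opt(L_N)\to\infty$ as~$N\to\infty$—so these instances feed into the limsup in the definition of~$\PoA$—is a detail the paper leaves implicit.
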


We now prove Theorem~\ref{thm:PoA_lwbd}.
%
%\begin{proof}[Proof of Theorem~\ref{thm:PoA_lwbd}]
  Let~$d_0$ be as in Lemma~\ref{lem:large_value_pack} and
  suppose~$d\geq d_0$.  Moreover, let~$\cU$ be as given in that
  lemma. 
  We now invoke Lemma~\ref{lem:PoA_from_good} with~$\epsilon=S^{-2}$.
  Note that condition~\eqref{eq:epsilon_upbdd} does hold, as
  $\epsilon=S^{-2}\leq1/(S-1)=1/(k_{\max}(\cU)-1)$. Combining
  Lemma~\ref{lem:PoA_from_good} and Lemma~\ref{lem:large_value_pack},
  we conclude that~$\PoA(d)\geq\val(\cU)\geq{d/5\log d}$.
%\end{proof}

%%%%%%%%%%%%%%%%%%%%%%%%%%%%% Lemma 1

\section{Proof of Lemma~\ref{lem:large_value_pack}}
\label{sec:packing_proof}
We shall describe packings in terms of words of certain languages. For
that, we define the languages we are interested in, show the
properties we require, and then prove their existence. Owing to space
limitation, we present only an outline of the proof of
Lemma~\ref{lem:large_value_pack}.

%%% Tirada a  1a. subsection (foi para o appendix)

\subsection{Separated families of languages}
\label{sec:words_fam}

Let an integer~$d\geq2$ be fixed.  We consider sets of
words~$L_k\subset[k]^d=\{1,\dots,k\}^d$ for~$k\geq2$.  We refer to
such~$L_k$ as \textit{languages} or $k$-\textit{languages}.  Such
languages~$L_k$ will specify `positions' where we shall
place~$Q_k^d(\epsilon)$ in certain packings (roughly speaking, for
each~$w\in L_k$, we put a certain copy~$Q(w)$
of~$Q_k^d(\epsilon)$ in our packings
(see~\eqref{eq:x^q(j)_def}--\eqref{eq:Q(w)_also} for the definition
of~$Q(w)$)).

We now introduce some conditions on the~$L_k$ that will
help us make sure that we have a packing when we consider the~$Q(w)$ 
($w\in L_k$) all together. 

\begin{definition}[Gapped languages]
  \label{def:gapped_langs}
  Suppose~$k\geq2$ and let a $k$-language $L_k\subset[k]^d$ be given.
  We say that~$L_k$ \textit{misses}~$j$ at coordinate~$i_0$ if every
  word~$w=(w_i)_{1\leq i\leq d}$ in~$L_k$ is such that~$w_{i_0}\neq
  j$.  Furthermore, $L_k$~is said to be \textit{gapped} if, for
  each~$1\leq i\leq d$, either~$L_k$ misses~$k-1$ at~$i$ or~$L_k$
  misses~$k$ at~$i$.
\end{definition}

The reason we are interested in gapped languages is as follows.
Suppose~$L_k$ is a gapped language as in
Definition~\ref{def:gapped_langs}, and
suppose~$w=(w_i)_{1\leq i\leq d}$ and~$w'=(w_i')_{1\leq i\leq d}$ are
distinct words in~$L_k$.  Then~$Q(w)$ and~$Q(w')$ do not overlap (this
can be checked from~\eqref{eq:Q(w)_def} and
Fact~\ref{fact:gap}\ref{enum:fact_gap.ii}; see
Lemma~\ref{lem:indeed_packs}(\textrm{i})).  Thus, if we let $\cP_k$~be
the collection of the~$Q(w)$ ($w\in L_k$), then~$\cP_k$ is a packing.
% Suppose now that~$2\leq k<k'$ and~$L_k\subset[k]^d$
% and~$L_{k'}\subset[k']^d$ are given.  
We now introduce a certain notion of `compatibility' between two
languages~$L_k$ and~$L_{k'}$, so that~$\cP_k$ and~$\cP_{k'}$ can be
put together to obtain a packing if they come from `compatible'
languages~$L_k$ and~$L_{k'}$.

\begin{definition}[Separated languages]
  \label{def:separated_langs}
  Suppose~$2\leq k<k'$ and~$L_k\subset[k]^d$ and~$L_{k'}\subset[k']^d$
  are given.  We say that~$L_k$ and~$L_{k'}$ are \textit{separated}
  if, for any~$w=(w_i)_{1\leq i\leq d}\in L_k$ and
  any~$w'=(w_i')_{1\leq i\leq d}\in L_{k'}$, there is some~$i$ such
  that~$w_i<k<k'=w_i'$.  
\end{definition}

Suppose~$L_k$ and~$L_{k'}$ are gapped and separated.  Consider the
corresponding packings~$\cP_k$ and~$\cP_{k'}$ as above.
Fact~\ref{fact:gap}\ref{enum:fact_gap.i} and~\eqref{eq:Q(w)_def}~imply
that~$\cP_k\cup\cP_{k'}$ is a packing.  To check this,
let~$w=(w_i)_{1\leq i\leq d}\in L_k$ and
any~$w'=(w_i')_{1\leq i\leq d}\in L_{k'}$ be given.  Then, by
definition, there is some~$i$ such that~$w_i<k<k'=w_i'$.  This implies
that~$Q(w)=Q^{(k)}(w)$ and~$Q(w')=Q^{(k')}(w')$ are disjoint `in
the $i$th dimension' (see
Fact~\ref{fact:gap}\ref{enum:fact_gap.i} and
Lemma~\ref{lem:indeed_packs}(\textrm{i})).  

\begin{definition}[Separated families]
  \label{def:separated_families}
  Let~$\cL=(L_k)_{2\leq k\leq S}$ be a family of
  $k$-lan\-guages $L_k\subset[k]^d$. If, for every~$2\leq k<k'\leq S$,
  the languages~$L_k$ and~$L_{k'}$ are separated, then we say
  that~$\cL$ is a \textit{separated} family of languages.
\end{definition}

%%%%%%%%%%%%%%%%%%%%%%%%%%%%%%%%%%%%%%%%%%%%%%%%%%%%%%%%%%%%%%%%%%%%%%
\begin{remark}
  \label{rem:warm-up} For~$2\leq k\leq d$,
  let~$L_k=\big\{w=(w_i)_{1\leq i\leq k}\in[k]^d\:w_k=k\text{ and
  }w_i<k\text{ for all~$i\neq k$}\big\}$.  One can then check
  that~$\cL=(L_k)_{2\leq k\leq d}$ is a family of gapped, separated
  languages.  Consider the packing~$\cP=\bigcup_{2\leq k\leq d}\cP_k$
  with the~$\cP_k$ defined by the~$L_k$ as above.  We
  have~$\nu_k(\cP)=|L_k|=(k-1)^{d-1}$ (recall~\eqref{eq:nu_k(U)_def})
  and~$\val(\cP)=\sum_{2\leq k\leq d}1/(k-1)\sim\log d$
  (recall~\eqref{eq:val_def}).  The existence of~$\cP$ implies a weak
  form of Theorem~\ref{thm:lwbd_prbsa} (namely, a lower bound
  of~$\Omega(\log d)$ instead of~$\Omega(d/\log d)$); for details, see
  the proof of Theorem~\ref{thm:lwbd_prbsa} in
  Appendix~\ref{sec:pfofThm.lwbd_prbsa}.
\end{remark}
%%%%%%%%%%%%%%%%%%%%%%%%%%%%%%%%%%%%%%%%%%%%%%%%%%%%%%%%%%%%%%%%%%%%%%

Remark~\ref{rem:warm-up} above illustrates the use we wish to make of
families of gapped, separated languages.  Our focus will soon shift
onto producing much `better' families than the one explicitly defined in
Remark~\ref{rem:warm-up}.  Indeed, the main result in this section is
the following lemma, for which we give a probabilistic proof (see
Section~\ref{sec:pf_languages_lemma} and
Appendix~\ref{sec:languages_lemma}).

\begin{lemma}[Many large, separated gapped languages]
  \label{lem:sep_langs}
  There is an absolute constant~$d_0$ such that, for any~$d\geq d_0$,
  there is a separated family~$\cL=(L_k)_{2\leq k\leq S}$ of gapped
  $k$-languages~$L_k\subset[k]^d$ such that
   \begin{equation}
    \label{eq:sep_langs.Ls}
    |L_k|\geq{10\over11}(k-1)^d,
  \end{equation}
for every~$2\leq k\leq S$, where
  \begin{equation}
    \label{eq:sep_langs.Q}
    S=\llceil2d\over9\log d\rrceil.
  \end{equation}
\end{lemma}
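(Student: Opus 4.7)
The plan is to construct the family~$\cL=(L_k)_{2\le k\le S}$ by the probabilistic method. For each~$k\in\{2,\dots,S\}$ and each~$i\in[d]$, the first step is to choose independently and uniformly at random a ``missed value'' $b_{k,i}\in\{k-1,k\}$. Writing $G_k=\{w\in[k]^d\:w_i\ne b_{k,i}\text{ for every }i\in[d]\}$ for the set of gapped words determined by $b_{k,\cdot}$, one has $|G_k|=(k-1)^d$. The plan is then to set $L_k:=G_k\setminus B_k$, where $B_k$ is the union, over all~$k_s<k$, of the words in $G_k$ that would violate the separation condition against~$L_{k_s}$.

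The combinatorial core of the construction is to use the coordinate sets
\begin{equation*}
J_{k_s,k}=\{i\in[d]\:b_{k_s,i}=k_s\text{ and }b_{k,i}=k-1\}
\end{equation*}
as \emph{witnesses}. Indeed, at any~$i\in J_{k_s,k}$ every~$w\in L_{k_s}\subseteq G_{k_s}$ automatically satisfies~$w_i<k_s$ (because $b_{k_s,i}=k_s$ forces $w_i\le k_s-1$), while words in~$G_k$ are \emph{allowed} to take the value~$k$ at that same coordinate. Hence a word $w'\in G_k$ is \emph{not} bad for the pair $(k_s,k)$ precisely when $w'_i=k$ holds for some $i\in J_{k_s,k}$. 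A direct count bounds the number of bad words by $(k-2)^{|J_{k_s,k}|}(k-1)^{d-|J_{k_s,k}|}$, so summing over~$k_s$ and using~$1-x\le\ee^{-x}$,
\begin{equation*}
|B_k|\le (k-1)^d\sum_{k_s=2}^{k-1}\exp\!\(-\frac{|J_{k_s,k}|}{k-1}\).
\end{equation*}

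The probabilistic step is then to show that, with positive probability, $|B_k|\le(k-1)^d/11$ simultaneously for every~$k$. Since $|J_{k_s,k}|$ is a sum of $d$ i.i.d.\ Bernoulli($1/4$) variables, $\EE|J_{k_s,k}|=d/4$, and Hoeffding's inequality yields $\PP[|J_{k_s,k}|<11d/48]\le\exp(-d/1152)$. A union bound over the $O(S^2)=O(d^2)$ pairs $(k_s,k)$ shows that the event $\{|J_{k_s,k}|\ge 11d/48\text{ for all }k_s<k\le S\}$ holds with probability tending to $1$. On this event, the displayed bound is at most $(k-1)^d\cdot S\cdot\exp(-11d/(48(k-1)))$, and since $S=\lceil 2d/(9\log d)\rceil$ gives $(S-1)\log(11S)\le S\log S+O(S)\le 2d/9+o(d)<11d/48$ for~$d$ large, this quantity is at most $(k-1)^d/11$, uniformly in~$k$.

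Fixing any outcome in this high-probability event and defining~$L_k=G_k\setminus B_k$, each $L_k$ is gapped, the family $(L_k)_{2\le k\le S}$ is separated, and $|L_k|\ge(10/11)(k-1)^d$, as required. The main obstacle is the tight numerical balance: separation demands that each~$|J_{k_s,k}|$ be at least roughly $(k-1)\log k$ simultaneously for all pairs, which for~$k=S$ grows as $S\log S\sim 2d/9$; we can afford this only because the Bernoulli($1/4$) mean $d/4$ strictly exceeds $2d/9$, and this strict inequality ($1/4>2/9$) is precisely what pins down the constant~$2/9$ in the choice $S=\lceil 2d/(9\log d)\rceil$ of the statement.
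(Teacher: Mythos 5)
Your proof is correct and follows essentially the same probabilistic strategy as the paper's. The paper organizes the randomness in two stages: it first fixes a family $(F_k)_{k\le d}$ of $\lceil d/2\rceil$-subsets of $[d]$ with all pairwise intersections below $7d/26$ (proved to exist via a Chernoff bound for the hypergeometric distribution), sets $L_k$ to miss $k-1$ on $F_k$ and miss $k$ off $F_k$, and then removes from $L_k$ the ``$\ell$-bad'' words (those taking no value $k$ on $J(\ell,k)=F_k\setminus F_\ell$), bounding the bad fraction by a direct count. You instead randomize all the per-coordinate missed values $b_{k,i}$ at once, which makes $|J_{k_s,k}|$ a Binomial$(d,1/4)$ variable, and you appeal to Hoeffding's inequality plus a union bound over $O(S^2)$ pairs. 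This is the same argument in a slightly more streamlined, one-shot packaging; in fact your $J_{k_s,k}$ coincides with the paper's $F_k\setminus F_{k_s}$ under the translation $F_k=\{i:b_{k,i}=k-1\}$, and your concentration target $|J|\ge 11d/48\approx 0.229d$ is essentially the paper's $|J|>3d/13\approx 0.231d$. Your numerics work for the same reason the paper's do: both need $|J|$ to clear $S\log S\sim 2d/9\approx 0.222d$, and the expectation $d/4$ comfortably exceeds this. One small point of precision: your phrase ``precisely when $w'_i=k$ for some $i\in J_{k_s,k}$'' is really only an implication (sufficiency for separation), not an equivalence, but that is exactly what the argument uses, so removing the slightly larger set $B_k$ still yields a separated family.
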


Fix~$\cL=(L_k)_{2\leq k\leq S}$ a family of separated, gapped
$k$-languages~$L_k\subset[k]^d$.  We shall now give, for every
sufficiently small~$\epsilon>0$, the construction of a
packing~$\cU_\epsilon=\cU_\epsilon(\cL)$ of $d$-hypercubes into the
unit bin~$[0,1]^d$ using~$\cL$.  Choosing~$\cL$ suitably, we shall be
able to prove Lemma~\ref{lem:good_exists} below, which takes us very
close to the proof of Lemma~\ref{lem:large_value_pack}.

\subsubsection{The packing~$\cU_\epsilon$.}
\label{sec:packing-cu_epsilon}
The packing~$\cU_\epsilon=\cU_\epsilon(\cL)$
contains copies of the hypercubes~$Q_k^d(\epsilon)$ for~$2\leq k\leq S$.
In fact, for each~$w\in L_k$ ($2\leq k\leq S$), we place a
copy~$Q(w)$ of~$Q_k^d(\epsilon)$ in~$\cU_\epsilon$.
To specify the location of the copy~$Q(w)$ of~$Q_k^d(\epsilon)$
in~$\cU_\epsilon$, we need a definition.

\medskip %%%%%%%%%%%%%%%%%%%%%%%%%%%%%%%  yw colocou

\begin{definition}[Base point coordinates of the~$Q(w)$]
  \label{def:x^k(j)}
  For every~$k\geq2$ and~$0<\epsilon<1/(k-1)$, let
  \begin{equation}
    \label{eq:x^q(j)_def}
    x^{(k)}(j)=x_\epsilon^{(k)}(j)=
    \begin{cases}\displaystyle
      {j-1\over k_-}={(j-1)(1+\epsilon)\over k},
      &\text{if }1\leq j<k\\\displaystyle
      1-{1\over k_-}=1-{1+\epsilon\over k},
      &\text{if }j=k.
    \end{cases}
  \end{equation}
  Moreover, for~$1\leq j\leq k$, let
  \begin{equation}
    \label{eq:y^{(q)}(j)}
    y^{(k)}(j)=x^{(k)}(j)+{1\over k_-}
    =x^{(k)}(j)+{1+\epsilon\over k}.
  \end{equation}
\end{definition}

Note that, for each~$2\leq k\leq S$, we have 
\begin{multline}
  \label{eq:x^d(j).2}
  0=x^{(k)}(1)<y^{(k)}(1)=x^{(k)}(2)<y^{(k)}(2)=x^{(k)}(3)<\cdots
  <y^{(k)}(k-2) \\ 
  =x^{(k)}(k-1)
  <x^{(k)}(k)<y^{(k)}(k-1)<y^{(k)}(k)=1.
\end{multline}
For convenience, for every~$k\geq2$ and every~$1\leq j\leq k$, let
\begin{equation}
  \label{eq:I^(k)(j)_def}
  I^{(k)}(j)=(x^{(k)}(j),y^{(k)}(j))\subset[0,1].
\end{equation}
%for every~$k\geq2$ and every~$1\leq j\leq k$.  
Now, for each word~${w}=(w_i)_{1\leq i\leq d}\in L_k$ ($2\leq k\leq S$), let
\begin{equation}
  \label{eq:x(w)_def}
  x[w]=x^{(k)}[w]=(x^{(k)}(w_1),\dots,x^{(k)}(w_d))\in\RR^d,
   {\rm and} 
\end{equation}
%Furthermore, for each such~$\tilde{w}$, let
\begin{equation}
  \label{eq:Q(w)_def}
  Q(w)=Q^{(k)}(w)=x^{(k)}[w]+Q_k^d(\epsilon)\subset[0,1]^d.
\end{equation}
Putting together the definitions, one checks that 
\begin{multline}
 \label{eq:Q(w)_also}
  Q(w)=Q^{(k)}(w)
  =I^{(k)}(w_1)\times\dots\times I^{(k)}(w_d)\\
  =\big(x^{(k)}(w_1),y^{(k)}(w_1)\big)
  \times\dots\times
  \big(x^{(k)}(w_d),y^{(k)}(w_d)\big)
  \subset[0,1]^d.\qquad
\end{multline}

\medskip %%%%%%%%%%%%%%%%%%%% yw colocou 

\begin{definition}[Packing~$\cU_\epsilon=\cU_\epsilon(\cL)$]
  \label{def:cU_epsilon(cL)}
  Suppose $\cL=(L_k)_{2\leq k\leq S}$ is a family of separated, gapped
  $k$-languages~$L_k\subset[k]^d$.  Let~$0<\epsilon\leq S^{-2}$.
  Define the packing~$\cU_\epsilon=\cU_\epsilon(\cL)$ as follows.  For
  each~$2\leq k\leq S$ and each~$w\in L_k$, place the
  copy~$Q(w)=Q^{(k)}(w)\subset[0,1]^d$ of~$Q_k^d(\epsilon)$
  in~$\cU_\epsilon$.
\end{definition}

To prove that~$\cU_\epsilon$ is indeed a packing, that is, that the
hypercubes in~$\cU_\epsilon$ are pairwise disjoint, the following
fact can be used (see Appendix~\ref{sec:languages_lemma}).
% (recall Definition~\ref{def:x^k(j)} and~\eqref{eq:I^(k)(j)_def}).

%\vfill
%\newpage %%%%%%%%%%%%%%% yw colocou 

\begin{fact}
  \label{fact:gap}
  The following assertions hold.
  \begin{enumerate}[label=\rmlabel]
    %{\rm(\textit{\roman*})}]
  \item\label{enum:fact_gap.i} Suppose~$2\leq k<k'\leq S$
    and~$0<\epsilon\leq S^{-2}$.  Then
    \begin{equation}
      \label{eq:gap}
      y^{(k)}(k-1)<x^{(k')}(k').
    \end{equation}
    In particular, the intervals~$I^{(k)}(j)$ $(1\leq j<k)$ are disjoint
    from~$I^{(k')}(k')$.
  \item\label{enum:fact_gap.ii} For any~$2\leq k\leq S$, the
    intervals~$I^{(k)}(j)$ $(1\leq j\leq k)$ are pairwise disjoint,
    except for the single pair formed by~$I^{(k)}(k-1)$
    and~$I^{(k)}(k)$.
  \end{enumerate}
\end{fact}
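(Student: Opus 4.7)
The plan is to prove both parts of Fact~\ref{fact:gap} by direct substitution from the closed forms in Definition~\ref{def:x^k(j)}, using only the hypothesis $\epsilon\leq S^{-2}$ and the range $2\leq k<k'\leq S$.

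For part~\ref{enum:fact_gap.i}, I would first evaluate the two sides of~\eqref{eq:gap}. Since $k-1<k$, the first branch of~\eqref{eq:x^q(j)_def} gives $x^{(k)}(k-1)=(k-2)(1+\epsilon)/k$, and hence $y^{(k)}(k-1)=(k-1)(1+\epsilon)/k$; the second branch gives $x^{(k')}(k')=1-(1+\epsilon)/k'$. The target inequality $y^{(k)}(k-1)<x^{(k')}(k')$ then rearranges to $(1+\epsilon)\bigl[1-(k'-k)/(kk')\bigr]<1$. Setting $\delta=(k'-k)/(kk')\in(0,1)$, it suffices (as $(1+\epsilon)(1-\delta)<1$ is equivalent to $\epsilon(1-\delta)<\delta$) to show $\epsilon<\delta$. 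Now $k'-k\geq1$ and $kk'\leq S(S-1)$ because $k<k'\leq S$, hence $\delta\geq 1/(S(S-1))>1/S^2\geq\epsilon$. The ``in particular'' clause is then immediate: for every $j<k$ one has $I^{(k)}(j)\subset(0,y^{(k)}(k-1))$, and this entire interval lies strictly below the left endpoint $x^{(k')}(k')$ of $I^{(k')}(k')$.

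For part~\ref{enum:fact_gap.ii}, the chain~\eqref{eq:x^d(j).2} already records the identities $y^{(k)}(j)=x^{(k)}(j+1)$ for $1\leq j\leq k-2$, which show that $I^{(k)}(1),\dots,I^{(k)}(k-1)$ are consecutive open intervals sharing only endpoints, and hence are pairwise disjoint. It remains to argue that $I^{(k)}(k)=\bigl(1-(1+\epsilon)/k,\,1\bigr)$ is disjoint from $I^{(k)}(j)$ whenever $j\leq k-2$. Since $y^{(k)}(j)\leq y^{(k)}(k-2)=(k-2)(1+\epsilon)/k$, this reduces to $(k-2)(1+\epsilon)/k\leq 1-(1+\epsilon)/k$, equivalently $(k-1)(1+\epsilon)\leq k$, i.e.\ $\epsilon\leq 1/(k-1)$. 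This holds because $\epsilon\leq S^{-2}\leq 1/(S-1)\leq 1/(k-1)$ whenever $2\leq k\leq S$.

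There is no real obstacle: both parts are a quantitative check that the multiplicative slack $1+\epsilon$ in the side length of $Q_k^d(\epsilon)$ is small enough compared to the intrinsic gap $1/(kk')$ (for part~\ref{enum:fact_gap.i}) or $1/(k(k-1))$ (for part~\ref{enum:fact_gap.ii}) between nominal grid positions. The choice $\epsilon\leq S^{-2}$ is tuned precisely so that this comparison goes through uniformly for all $k,k'$ in the range $[2,S]$, and no further combinatorial input is required.
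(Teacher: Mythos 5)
Your proof is correct and follows essentially the same route as the paper: for part~\ref{enum:fact_gap.i} the paper also substitutes the closed forms and reduces~\eqref{eq:gap} to $\epsilon<(1+\epsilon)(1/k-1/k')$, which is just a reorganization of your $(1+\epsilon)(1-\delta)<1$ with $\delta=(k'-k)/(kk')$, and then uses $1/k-1/k'\geq 1/(S(S-1))>S^{-2}\geq\epsilon$; for part~\ref{enum:fact_gap.ii} the paper simply cites the ordering chain~\eqref{eq:x^d(j).2}, and your computation is exactly the verification of the step $x^{(k)}(k-1)<x^{(k)}(k)$ in that chain.
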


For the next lemma, recall~\eqref{eq:K(U)_def}
and~\eqref{eq:nu_k(U)_def}, and Definition~\ref{def:(1+epsilon)/ZZ}.

\begin{lemma}  % antigo Lemma 10
  \label{lem:indeed_packs}
  Suppose $\cL=(L_k)_{2\leq k\leq S}$ is a family of separated,
  non-empty gapped $k$-languages~$L_k\subset[k]^d$.
  Suppose~$0<\epsilon\leq S^{-2}$.
  Let~$\cU_\epsilon=\cU_\epsilon(\cL)$ be the family of all the
  hypercubes~$Q(w)=Q^{(k)}(w)\subset[0,1]^d$ with~$w\in L_k$
  and~$2\leq k\leq S$.  Then the following assertions hold:
  % \begin{enumerate}[label=\rmlabel]
  {\rm(i)}~the hypercubes in~$\cU_\epsilon$ are pairwise disjoint and form a
  packing of type~$(1+\epsilon)\ZZ_{\geq2}^{-1}$; %\label{enum:indeed.i}
  {\rm(ii)}~for every~$2\leq k\leq S$, we
  have~$\nu_k(\cU_\epsilon)=|L_k|$; %\label{enum:indeed.ii} 
  {\rm(iii)}~$|K(\cU_\epsilon)|=S-1$. %\label{enum:indeed.iii}
\end{lemma}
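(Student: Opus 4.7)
The plan is to establish (i) first, since (ii) and (iii) will follow easily by counting once we know the hypercubes in~$\cU_\epsilon$ are genuinely distinct and disjoint. To see that each~$Q(w)\subset[0,1]^d$ (so that~$\cU_\epsilon$ is a packing into the unit bin), note that by~\eqref{eq:x^d(j).2} we have $I^{(k)}(j)\subset(0,1)$ for every $1\leq j\leq k$ and every $2\leq k\leq S$, so \eqref{eq:Q(w)_also} gives~$Q(w)\subset(0,1)^d\subset[0,1]^d$. That the packing is of type $(1+\epsilon)\ZZ_{\geq2}^{-1}$ is immediate, since by~\eqref{eq:Q(w)_def} each member of~$\cU_\epsilon$ is a translate of some~$Q_k^d(\epsilon)$ with $k\in\{2,\dots,S\}\subset\ZZ_{\geq2}$.

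For pairwise disjointness I would argue by cases. Take two distinct words $w\in L_k$ and $w'\in L_{k'}$ with $2\leq k\leq k'\leq S$. In the \emph{same-language case} $k=k'$, pick any coordinate~$i$ where $w_i\neq w_i'$. Because~$L_k$ is gapped, at coordinate~$i$ it misses either $k-1$ or $k$, so $\{w_i,w_i'\}$ avoids at least one of~$k-1,k$. By Fact~\ref{fact:gap}\ref{enum:fact_gap.ii} the intervals~$I^{(k)}(1),\dots,I^{(k)}(k)$ are pairwise disjoint except for the single pair $\{I^{(k)}(k-1),I^{(k)}(k)\}$, which we just excluded; hence $I^{(k)}(w_i)\cap I^{(k)}(w_i')=\emptyset$ and therefore $Q(w)\cap Q(w')=\emptyset$ by~\eqref{eq:Q(w)_also}. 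In the \emph{cross-language case} $k<k'$, the separated property supplies a coordinate~$i$ with $w_i<k<k'=w_i'$; by Fact~\ref{fact:gap}\ref{enum:fact_gap.i}, we have $y^{(k)}(w_i)\leq y^{(k)}(k-1)<x^{(k')}(k')=x^{(k')}(w_i')$, so $I^{(k)}(w_i)$ and $I^{(k')}(w_i')$ are disjoint, and again $Q(w)\cap Q(w')=\emptyset$.

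For (ii), I would observe that for each fixed~$k$, the map $w\mapsto Q^{(k)}(w)$ is injective on~$L_k$: if $w\neq w'$ then they differ at some coordinate~$i$, and the base-point coordinates $x^{(k)}(1),\dots,x^{(k)}(k)$ are all distinct by~\eqref{eq:x^d(j).2}, so $Q^{(k)}(w)$ and $Q^{(k)}(w')$ have different projections on the $i$-th axis. Combined with the fact that cubes coming from distinct values of~$k$ have different side lengths, hence are distinct, this shows that every $w\in L_k$ contributes a separate copy of $Q_k^d(\epsilon)$, giving $\nu_k(\cU_\epsilon)=|L_k|$. For (iii), since each~$L_k$ with $2\leq k\leq S$ is assumed non-empty, $\nu_k(\cU_\epsilon)\geq1$ and hence $k\in K(\cU_\epsilon)$, so $K(\cU_\epsilon)=\{2,\dots,S\}$ has size $S-1$.

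There is no real obstacle here: the lemma is essentially a verification that the combinatorial conditions (gapped, separated) match the geometric facts about the intervals $I^{(k)}(j)$ recorded in Fact~\ref{fact:gap}. The only point requiring a bit of care is keeping the two cases of disjointness clean, in particular remembering that the gapped condition is precisely what kills the one exceptional overlapping pair in Fact~\ref{fact:gap}\ref{enum:fact_gap.ii}, and that the separated condition is precisely what feeds into Fact~\ref{fact:gap}\ref{enum:fact_gap.i}.
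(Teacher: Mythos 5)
Your proposal is correct and follows essentially the same route as the paper: the two-case disjointness argument (same-language via the gapped property and Fact~\ref{fact:gap}\ref{enum:fact_gap.ii}, cross-language via the separated property and Fact~\ref{fact:gap}\ref{enum:fact_gap.i}) is exactly what the paper does, and the paper simply declares (ii) and (iii) to be clear, which you spell out in the natural way.
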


%%%%%%%%%%%%%%%%%%

\begin{lemma}  % antigo Lemma 7
  \label{lem:good_exists}
  There is an absolute constant~$d_0$ for which the following holds
  for any~$d\geq d_0$.  Let $S=\lceil 2d/(9\log d)\rceil.$ \label{eq:S_def.2}
  The unit bin admits a packing~$\cU$ of type $(1+S^{-2})\ZZ_{\geq2}^{-1}$ and
  with~$k_{\max}(\cU)=S$ such that $\val(\cU) \geq(10/11)(S-1)$.
\end{lemma}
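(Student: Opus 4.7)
\smallskip

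The plan is to obtain the packing~$\cU$ as the packing~$\cU_\epsilon(\cL)$ constructed from a suitable family~$\cL$ of gapped, separated languages, with~$\epsilon=S^{-2}$. All the combinatorial content is already packaged in Lemma~\ref{lem:sep_langs} (existence of a large separated family) and Lemma~\ref{lem:indeed_packs} (the recipe turning such a family into a genuine packing), so what remains is a short bookkeeping argument.

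First, let~$d_0$ be large enough so that Lemma~\ref{lem:sep_langs} applies, and let~$d\geq d_0$. Apply Lemma~\ref{lem:sep_langs} to obtain a separated family $\cL=(L_k)_{2\leq k\leq S}$ of gapped $k$-languages $L_k\subset[k]^d$ with
\begin{equation*}
  |L_k|\geq\frac{10}{11}(k-1)^d
  \quad\text{for every }2\leq k\leq S.
\end{equation*}
In particular, each~$L_k$ is non-empty (for~$d_0$ large enough this holds even when~$k=2$). Next, set~$\epsilon=S^{-2}$ and form the packing~$\cU=\cU_\epsilon(\cL)$ of Definition~\ref{def:cU_epsilon(cL)}. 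By Lemma~\ref{lem:indeed_packs}(i)--(iii), $\cU$~is a packing of type~$(1+S^{-2})\ZZ_{\geq2}^{-1}$, it satisfies $\nu_k(\cU)=|L_k|$ for every~$2\leq k\leq S$, and $|K(\cU)|=S-1$, so $K(\cU)=\{2,\dots,S\}$ and in particular $k_{\max}(\cU)=S$, as required.

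It only remains to bound~$\val(\cU)$ from below. Using the definition~\eqref{eq:val_def} and the lower bound on~$|L_k|$,
\begin{equation*}
  \val(\cU)=\sum_{k=2}^{S}(k-1)^{-d}\nu_k(\cU)
  =\sum_{k=2}^{S}(k-1)^{-d}|L_k|
  \geq\sum_{k=2}^{S}(k-1)^{-d}\cdot\frac{10}{11}(k-1)^d
  =\frac{10}{11}(S-1),
\end{equation*}
which is the desired inequality.

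The genuine obstacle is of course Lemma~\ref{lem:sep_langs}; the proof proposed here is essentially a one-line reduction to it and to Lemma~\ref{lem:indeed_packs}. The only points requiring a tiny bit of care are checking that~$\epsilon=S^{-2}$ meets the hypothesis~$0<\epsilon\leq S^{-2}$ of Lemma~\ref{lem:indeed_packs} (it does, with equality), and verifying that the~$L_k$ are non-empty so that Lemma~\ref{lem:indeed_packs} may be applied and~$K(\cU)$ contains every~$k\in\{2,\dots,S\}$; both are immediate from the lower bound on~$|L_k|$ for~$d$ large enough.
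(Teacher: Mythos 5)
Your proof is correct and follows exactly the route the paper intends: the paper itself states that Lemma~\ref{lem:good_exists} is an immediate corollary of Lemmas~\ref{lem:sep_langs} and~\ref{lem:indeed_packs}, and your argument is precisely the bookkeeping that makes that explicit, including the two small checks (non-emptiness of the~$L_k$ and $\epsilon=S^{-2}$ meeting the hypothesis) that are genuinely needed.
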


Lemma~\ref{lem:good_exists}  is an immediate corollary of
Lemmas~\ref{lem:sep_langs} and~\ref{lem:indeed_packs}. From it, 
the proof of Lemma~\ref{lem:large_value_pack} follows easily:
taking the packing~$\cU$ given in
this lemma, we have that $\val(\cU) \geq(10/11)(S-1) \geq d/(5\log d)$,
as long as $d$ is large enough.

\section{Proof of Lemma~\ref{lem:sep_langs}}
\label{sec:pf_languages_lemma}

We need the following auxiliary fact, which follows from
standard Chernoff bounds for the hypergeometric distribution.

\begin{fact}
  \label{fact:cF}
  There is an absolute constant~$d_0$ such that, for any~$d\geq d_0$,
  there are sets~$F_1,\dots,F_d\subset[d]$ such that {\rm(i)}~for
  every~$1\leq k\leq d$, we have~$|F_k|=\lceil d/2\rceil$ and
  {\rm(ii)}~for every~$1\leq k<k'\leq d$, we
  have~$|F_k\cap F_{k'}|<7d/26$.
\end{fact}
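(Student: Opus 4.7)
\medskip

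\noindent\textit{Proof proposal.} The plan is to sample the sets $F_1,\dots,F_d$ independently and uniformly at random from $\binom{[d]}{\lceil d/2\rceil}$ and show, by a Chernoff bound followed by a union bound over the $\binom{d}{2}$ pairs, that with positive probability the resulting family satisfies (ii). Condition (i) will hold with probability $1$ by construction.

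More concretely, fix any two indices $1\leq k<k'\leq d$. Condition on the choice of $F_k$. Then $|F_k\cap F_{k'}|$ is distributed as a hypergeometric random variable with population size $d$, number of ``marked'' elements $|F_k|=\lceil d/2\rceil$, and sample size $\lceil d/2\rceil$. Its mean is
\begin{equation*}
\mu=\frac{\lceil d/2\rceil^2}{d}=\frac{d}{4}+O(1).
\end{equation*}
Since $7/26=1/4+1/52>1/4$, the event $|F_k\cap F_{k'}|\geq 7d/26$ is a deviation of order~$d$ above the mean. The standard Chernoff bound for the hypergeometric distribution (which behaves at least as well as the binomial in the upper tail, by e.g.\ Hoeffding's reduction) yields a bound of the form
\begin{equation*}
\PP\bigl(|F_k\cap F_{k'}|\geq 7d/26\bigr)\leq \exp(-cd)
\end{equation*}
for some absolute constant $c>0$ and all sufficiently large~$d$.

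Applying the union bound over the $\binom{d}{2}\leq d^2/2$ pairs $(k,k')$, the probability that \emph{some} pair fails condition~(ii) is at most $(d^2/2)\exp(-cd)$, which is strictly less than $1$ for all $d$ above some absolute threshold~$d_0$. Hence for any such $d$ there exists a deterministic choice of $F_1,\dots,F_d$ satisfying both (i) and (ii), completing the proof.

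I do not anticipate any substantive obstacle here: everything reduces to verifying that the chosen threshold $7/26$ strictly exceeds the hypergeometric mean~$1/4$, so that one genuinely has an exponentially small deviation bound available to absorb the polynomial union-bound factor. The only care needed is to cite (or briefly derive) the appropriate Chernoff inequality for the hypergeometric distribution, for which the cleanest route is Hoeffding's observation that sampling without replacement is dominated by sampling with replacement in the convex-ordering sense, thereby reducing the tail estimate to the standard binomial Chernoff bound.
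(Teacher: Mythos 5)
Your proposal is correct and follows essentially the same route as the paper: pick each $F_k$ uniformly and independently among the $\lceil d/2\rceil$-subsets of $[d]$, bound the upper tail of the hypergeometric $|F_k\cap F_{k'}|$ (whose mean is $\approx d/4$, strictly below the threshold $7d/26=d/4+d/52$) by a Chernoff-type inequality, and finish with a first-moment/union-bound argument over the $\binom{d}{2}$ pairs. The paper cites Janson--\L uczak--Ruci\'nski for the hypergeometric Chernoff bound where you invoke Hoeffding's reduction to the binomial; this is a cosmetic difference only.
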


%%%%%%%%%%

We now proceed to prove Lemma~\ref{lem:sep_langs}.
  Let~$S=\lceil2d/9\log d\rceil$ and let~$F_1,\dots,F_d$ be as in
  Fact~\ref{fact:cF}.  In what follows, we only use~$F_k$ for~$2\leq
  k\leq S$.  For each~$2\leq k\leq S$, we shall
  construct~$L_k\subset[k]^d$ in two parts.  First, let
  \begin{equation}
    \label{eq:L_k'.1}
    L_k'\subset([k]\setminus\{k-1\})^{F_k}
    =\{w=(w_i)_{i\in F_k}:w_i\in[k]\setminus\{k-1\}\text{ for all
    }i\in F_k\}
  \end{equation}
  and then set
  \begin{equation}
    \label{eq:L_k.1}
    \begin{split}
      L_k&=L_k'\times[k-1]^{[d]\setminus F_k}\\
      &=\{w=(w_i)_{1\leq i\leq d}\:
      \exists w'=(w_i')_{i\in F_k}\in L_k' \text{ such that
      }w_i=w_i'\text{ for all }i\in F_k\\
      &\quad\qquad\qquad\qquad\quad\enspace
      \text{ and }w_i\in[k-1]\text{ for all }i\in[d]\setminus F_k\}.
    \end{split}
  \end{equation}
  Note that, by~\eqref{eq:L_k'.1} and~\eqref{eq:L_k.1}, the
  $k$-language~$L_k$ will be gapped ($k-1$ is missed at
  every~$i\in F_k$ and~$k$ is missed at every~$i\in[d]\setminus F_k$).
  We shall prove that there is a suitable choice for the~$L_k'$
  with~$|L_k'|\geq(10/11)(k-1)^d$ ensuring
  that~$\cL=(L_k)_{2\leq k\leq S}$ is separated.  Since we shall then
  have
  \begin{equation}
    \label{eq:L_k.large}
    |L_k|=|L_k'|(k-1)^{d-|F_k|}\geq{10\over11}(k-1)^d,
  \end{equation}
  condition~\eqref{eq:sep_langs.Ls} will be satisfied.  We now proceed
  with the construction of the~$L_k'$.

  Fix~$2\leq k\leq S$.  For~$2\leq\ell<k$, let
  \begin{equation}
    \label{eq:J(ell,k)_def}
    J(\ell,k)=F_k\setminus F_\ell,
  \end{equation}
  and note that
  \begin{equation}
    \label{eq:J_large}
    |J(\ell,k)|>\llceil{d\over2}\rrceil-{7\over26}d
    \geq{3\over13}d.
  \end{equation}
  Let~$v=(v_i)_{i\in F_k}$ be an element
  of~$([k]\setminus\{k-1\})^{F_k}$ chosen uniformly at random.  For
  every~$2\leq\ell<k$, we say that~$v$ is \textit{$\ell$-bad}
  if~$v_i\neq k$ for every~$i\in J(\ell,k)$.  Moreover, we say
  that~$v$ is \textit{bad} if it is $\ell$-bad for
  some~$2\leq\ell<k$.  It is clear that
  \begin{equation}
    \label{eq:PP_v_ell-bad}
    \PP(v\text{ is $\ell$-bad})=\(1-{1\over k-1}\)^{|J(\ell,k)|}
    \leq\ee^{-|J(\ell,k)|/S}
    \leq\exp\(-{3d\over13\lceil2d/9\log d\rceil}\)
    \leq d^{-1},
  \end{equation}
  for every large enough~$d$, whence
  \begin{equation}
    \label{eq:PP_v_bad}
    \PP(v\text{ is bad})\leq Sd^{-1}\leq{1\over4\log d}\leq{1\over11}
  \end{equation}
  if~$d$ is large enough.  Therefore, at least~$(10/11)(k-1)^{|F_k|}$
  words~$v\in([k]\setminus\{k-1\})^{F_k}$ are not bad, as long as~$d$
  is large enough.  We let~$L_k'\subset([k]\setminus\{k-1\})^{F_k}$ be
  the set of such good words.  The following claim completes the proof
  of Lemma~\ref{lem:sep_langs}.

  \begin{claim}
    \label{claim:L_k'_good}
    With the above choice of~$L_k'$ $(2\leq k\leq S)$, the
    family~$\cL=(L_k)_{2\leq k\leq S}$ of the languages~$L_k$ as
    defined in~\eqref{eq:L_k.1} is separated.
  \end{claim}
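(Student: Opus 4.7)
The plan is to unpack the definitions and observe that the claim follows directly from the construction of the $L_k'$ as sets of ``good'' (i.e., not bad) words, together with the fact that, outside the index set $F_k$, the language $L_k$ only takes values in $[k-1]$.

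First, I would fix arbitrary $2\leq k<k'\leq S$, $w=(w_i)_{1\leq i\leq d}\in L_k$ and $w'=(w_i')_{1\leq i\leq d}\in L_{k'}$, and reduce to finding a single coordinate~$i$ such that $w_i<k<k'=w_i'$. By~\eqref{eq:L_k.1} applied to $k'$, the restriction $w'|_{F_{k'}}=(w_i')_{i\in F_{k'}}$ lies in $L_{k'}'$, hence is a \emph{good} word in $([k']\setminus\{k'-1\})^{F_{k'}}$, in particular not bad.

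Since $2\leq k<k'$, goodness of $w'|_{F_{k'}}$ implies that $w'|_{F_{k'}}$ is not $k$-bad (in the sense of the definition of $\ell$-bad specialized to $\ell=k$ in the construction of $L_{k'}'$). By the very definition of $k$-badness, there must exist some index
\begin{equation*}
  i\in J(k,k')=F_{k'}\setminus F_k
\end{equation*}
with $w_i'=k'$. This supplies the coordinate where $w'$ attains its maximal possible value~$k'$.

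It remains to check that $w_i<k$ at this same index~$i$. But $i\in F_{k'}\setminus F_k$ in particular means $i\notin F_k$, and by~\eqref{eq:L_k.1} applied to~$k$, every coordinate of $w$ outside $F_k$ lies in $[k-1]$; hence $w_i\leq k-1<k$. Combining, $w_i<k<k'=w_i'$, which is exactly what is needed for $L_k$ and $L_{k'}$ to be separated according to Definition~\ref{def:separated_langs}. Since $k<k'$ were arbitrary in $\{2,\dots,S\}$, the family $\cL$ is separated, proving Claim~\ref{claim:L_k'_good}. There is no serious obstacle here: the genuine work was the probabilistic estimate~\eqref{eq:PP_v_ell-bad}--\eqref{eq:PP_v_bad} guaranteeing that good words are plentiful; the claim itself is a straightforward bookkeeping of what ``good'' buys us, matched against the definition of $L_k$ off of~$F_k$.
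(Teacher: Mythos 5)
Your proof is correct and follows essentially the same argument as the paper's own proof of Claim~\ref{claim:L_k'_good}: restrict the word from the larger language $L_{k'}$ to $F_{k'}$, use goodness (i.e., not $k$-bad) to produce an index $i\in F_{k'}\setminus F_k$ where the value is $k'$, and observe that at that index the smaller word takes a value in $[k-1]$ because $i\notin F_k$. The only difference is notational (you write $k<k'$ where the paper writes $\ell<k$).
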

  \begin{proof}
    Fix~$2\leq\ell<k\leq S$.  We show that~$L_\ell$ and~$L_k$ are
    separated.  Let~$u=(u_i)_{1\leq i\leq d}\in L_\ell$
    and~$w=(w_i)_{1\leq i\leq d}\in L_k$ be given.  By the definition
    of~$L_k$, there is~$v=(v_i)_{i\in F_k}\in L_k'$ such
    that~$w_i=v_i$ for all~$i\in F_k$.  Furthermore, since~$v\in L_k'$
    is not a bad word, it is not $\ell$-bad.  Therefore, there
    is~$i_0\in J(\ell,k)=F_k\setminus F_\ell$ for which we
    have~$v_{i_0}=k$.  Observing that~$i_0\notin F_\ell$ and recalling
    the definition of~$L_\ell$, we see that $u_{i_0}<\ell<k=v_{i_0}=w_{i_0}$,
    as required.
  \end{proof}

%%%%%%%%%%%%%%%%%%%%%

% \medskip

%%%%%%%%%%%%%%%%%%%%%%%%%% parte do spoa %% vai para o appendix

%\bibliographystyle{amsplain}
\bibliographystyle{splncs}
\bibliography{refs,mybibs,new}

%\newpage

%\input{Appendix}

\appendix

\section*{Appendix. Omitted proofs}
\label{sec:omitted-proofs}

\section{Proof of Theorem~\ref{thm:lwbd_prbsa}}
\label{sec:pfofThm.lwbd_prbsa}
Let~$\cA$ be any algorithm for the online bounded space $d$-hypercube
bin packing problem.  Let~$M$ be the maximum
number of bins that~$\cA$ is allowed to leave open during its
execution. To prove that~$\cA$ has asymptotic performance
ratio~$\Omega(d/\log d)$, we construct a
suitable instance~$\cI$ for~$\cA$.

Let a packing~$\cU$ as in the statement of
Lemma~\ref{lem:large_value_pack} be fixed.  The instance~$\cI$ will be
constructed by choosing a suitable integer~$N$ and then arranging the
hypercubes in~$2MN$ copies of~$\cU$ in a linear order, with all the
hypercubes of the same size appearing together.  Let us now formally
describe~$\cI$.

Let
\begin{equation}
  \label{eq:N_def}
  N=\prod_{k\in K(\cU)}(k-1)^d
\end{equation}
and, for every~$k\in K(\cU)$, let
\begin{equation}
  \label{eq:N_sigma_def}
  N_{\hat k}
  ={N\over(k-1)^d}
  =\prod_{k'\in K(\cU)\setminus\{k\}}(k'-1)^d.
\end{equation}
Recall that~$\cU$ contains~$\nu_k(\cU)$ copies of~$Q_k^d(\epsilon)$
for every~$k\in K(\cU)$.  Let~$K=|K(\cU)|$ and
suppose~$K(\cU)=\{k_1,\dots,k_K\}$.  The instance~$\cI$ that we shall
construct is the concatenation of~$K$ segments,
say~$\cI=\cI_1\dots\cI_K$,
% \begin{equation}
%   \label{eq:cI_def}
%   \cI=\cI_1\dots\cI_K,
% \end{equation}
with each segment~$\cI_\ell$ ($1\leq\ell\leq K$) composed of a certain
number of copies of~$Q_{k_\ell}^d(\epsilon)$.  For
every~$1\leq\ell\leq K$, set
\begin{equation}
  \label{eq:f(ell))_def}
  f(\ell)=2MN\nu_{k_\ell}(\cU),
\end{equation}
and
%for every~$1\leq\ell\leq K$.  For every~$1\leq\ell\leq K$, we let
\begin{equation}
  \label{eq:cI_ell_def}
  \cI_\ell=(Q_{k_\ell}^d(\epsilon),\dots,Q_{k_\ell}^d(\epsilon))
  =Q_{k_\ell}^d(\epsilon)^{f(\ell)}.
\end{equation}
That is, $\cI_\ell$ is composed of a sequence of~$f(\ell)$ copies
of~$Q_{k_\ell}^d(\epsilon)$.  This completes the definition of our
instance~$\cI$.

Let us first state the following fact concerning the offline packing
of the hypercubes in~$\cI$.  This fact is clear, as we obtained~$\cI$ by
rearranging the hypercubes in~$2MN$ copies of~$\cU$.

\begin{fact}
  \label{fact:cI_packing}
  The hypercubes in~$\cI$ can be packed into at most~$2MN$
  unit bins.
\end{fact}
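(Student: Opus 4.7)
The plan is to exhibit an explicit offline packing of $\cI$ into $2MN$ unit bins by taking $2MN$ disjoint copies of the packing $\cU$ and verifying that, as a multiset of items, these copies together contain exactly the hypercubes appearing in $\cI$.

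First I would count the hypercubes of $\cI$ by side length. By the definitions~\eqref{eq:f(ell))_def} and~\eqref{eq:cI_ell_def}, the segment $\cI_\ell$ is a sequence of $f(\ell)=2MN\,\nu_{k_\ell}(\cU)$ copies of $Q_{k_\ell}^d(\epsilon)$, and since $k_1,\dots,k_K$ enumerate the set $K(\cU)$ without repetition, the full instance $\cI=\cI_1\cdots\cI_K$ consists of exactly $2MN\,\nu_k(\cU)$ copies of $Q_k^d(\epsilon)$ for each $k\in K(\cU)$ and of no other hypercubes. On the other hand, by~\eqref{eq:nu_k(U)_def} a single copy of $\cU$ contains $\nu_k(\cU)$ copies of $Q_k^d(\epsilon)$ for each $k\in K(\cU)$, so $2MN$ disjoint copies of $\cU$ together contain the same multiset of hypercubes as $\cI$.

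Second I would conclude and remark on the (non-)obstacle. Since $\cU$ is a valid packing of hypercubes into a single unit bin, placing $2MN$ disjoint replicas of $\cU$ produces a valid packing into $2MN$ unit bins; by the multiset identity established above, this is in particular an offline packing of precisely the items appearing in $\cI$, and the order in which those items are listed in the online sequence $\cI$ is irrelevant for the offline problem. Hence $\cI$ admits a packing into at most $2MN$ unit bins, as claimed. There is essentially no obstacle here: the fact is bookkeeping built into the definition of $\cI$, and the specific choice of $N=\prod_{k\in K(\cU)}(k-1)^d$ and the extra factor $2M$ play no role in the offline packing itself; they are calibrated for the complementary online lower bound that follows, where one will exploit that any bounded space algorithm with at most $M$ open bins is forced to close most of its $Q_k^d(\epsilon)$-bins far below capacity as the segments $\cI_1,\dots,\cI_K$ are processed in turn.
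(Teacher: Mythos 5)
Your proof is correct and is exactly the argument the paper gives, just spelled out: the paper notes in one line that $\cI$ is a rearrangement of the hypercubes in $2MN$ copies of $\cU$, and your multiset bookkeeping verifies precisely that. Nothing further is needed.
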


We now prove that, when~$\cA$ is given the instance~$\cI$ above, it
will have performance ratio at least as bad as~$\val(\cU)/2$.
In view of~\eqref{eq:large_value_pack} in
Lemma~\ref{lem:large_value_pack}, this will complete the proof of
Theorem~\ref{thm:lwbd_prbsa}.

Let us examine the behaviour of~$\cA$ when given input~$\cI$.
Fix~$1\leq\ell\leq K$ and suppose~$\cA$ has already seen the hypercubes
in~$\cI_1\dots\cI_{\ell-1}$ and it has already packed them somehow.
We now consider what happens when~$\cA$ examines the~$f(\ell)$ hypercubes
in~$\cI_\ell$, which are all copies of~$Q_{k_\ell}^d(\epsilon)$.

Clearly, since~$\epsilon>0$, the~$f(\ell)$ copies
of~$Q_{k_\ell}^d(\epsilon)$ in~$\cI_\ell$ cannot be packed into fewer
than
\begin{equation}
  \label{eq:cI_ell_lwbd}
  {f(\ell)\over(k_\ell-1)^d}
  ={2MN\nu_{k_\ell}(\cU)\over(k_\ell-1)^d}
  =2MN_{\hat k}\nu_{k_\ell}(\cU)
  \geq MN_{\hat k}\nu_{k_\ell}(\cU)+M
\end{equation}
unit bins.  Therefore, even if some hypercubes in~$\cI_\ell$ are placed in
bins still left open after the processing of~$\cI_1\dots\cI_{\ell-1}$,
the hypercubes in~$\cI_\ell$ will add at
least~$MN_{\hat k}\nu_{k_\ell}(\cU)$ new bins to the output of~$\cA$.
Thus, the total number of bins that~$\cA$ will use when
processing~$\cI$ is at least
\begin{equation}
  \label{eq:cA_lwbd}
  \sum_{k\in K(\cU)}MN_{\hat k}\nu_k(\cU)
  =MN\sum_{k\in K(\cU)}(k-1)^{-d}\nu_k(\cU)
  =MN\val(\cU).
\end{equation}
In view of Fact~\ref{fact:cI_packing}, it follows that the asymptotic
performance ratio of~$\cA$ is at least
\begin{equation}
  \label{eq:cA_apr_lwbd}
  {MN\val(\cU)\over2MN}={1\over2}\val(\cU),
\end{equation}
as required.  This completes the proof of
Theorem~\ref{thm:lwbd_prbsa}. 

%%%%%%%%%%%%%%%%%%%%%%%%%%%%%%%%%%%%%%%%%%%%%%%%%%%%%%%%%%%%%%%%%%
%\medskip

\section{Proofs of Section~\ref{sec:pfofThm.PoA_lwbd}}
\label{app:pfofThm.PoA_lwbd}

We start with the proof of Lemma~\ref{lem:an_equilibrium}, which
depends on the following simple result.

\begin{proposition}
  \label{prop:an_ineq}
  Suppose~$d\geq2$, $\ell\geq k+1$ and~$k>1$.  Then
  \begin{equation}
    \label{eq:an_ineq}
    \(1-{1\over k}\)^d+{1\over\ell^d}<\(1-{1\over\ell}\)^d.
  \end{equation}
\end{proposition}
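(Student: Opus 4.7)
The plan is first to reduce the inequality to the boundary case $\ell = k+1$ by a monotonicity argument. Setting $G(\ell) = (1-1/\ell)^d - 1/\ell^d$, the claim reads $G(\ell) > (1-1/k)^d$; since
\begin{equation*}
G'(\ell) = \frac{d(1-1/\ell)^{d-1}}{\ell^2} + \frac{d}{\ell^{d+1}} > 0,
\end{equation*}
$G$ is strictly increasing in $\ell$, so it is enough to prove $G(k+1) > (1-1/k)^d$.

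Next I would clear denominators in this boundary case. Multiplying through by $(k+1)^d$ and using $(k-1)(k+1)=k^2-1$, the inequality $(1-1/k)^d + 1/(k+1)^d < (1-1/(k+1))^d$ becomes $((k^2-1)/k)^d + 1 < k^d$, and after dividing by $k^d$ it is equivalent to $(1-1/k^2)^d < 1 - 1/k^d$.

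The core task is then to establish this last inequality for $k > 1$ and $d \geq 2$. I would factor both sides using
\begin{equation*}
(1-1/k^2)^d = (1-1/k)^d(1+1/k)^d \quad \text{and} \quad 1 - 1/k^d = (1-1/k)\sum_{i=0}^{d-1} k^{-i},
\end{equation*}
cancel the positive factor $(1-1/k)$, and reduce the claim to
\begin{equation*}
(1-1/k^2)^{d-1}(1+1/k) < \sum_{i=0}^{d-1} k^{-i}.
\end{equation*}
Since $0 < 1 - 1/k^2 < 1$, the left-hand side is strictly less than $1+1/k$, while the right-hand side is at least $1+1/k$ (with equality only when $d = 2$). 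I do not foresee any serious obstacle; the only mild subtlety is the boundary case $d = 2$, where the right-hand-side bound $1 + 1/k$ is attained with equality, but the strict bound on the left-hand side still closes the argument.
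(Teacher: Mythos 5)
Your proof is correct, but it takes a genuinely different route from the paper's. You reduce in the $\ell$-variable: noting that $G(\ell) = (1-1/\ell)^d - 1/\ell^d$ is strictly increasing, you bring the problem down to the single boundary case $\ell = k+1$, which after clearing denominators becomes the one-variable inequality $(1-1/k^2)^d < 1 - 1/k^d$; factoring out $(1-1/k)$ and bounding $(1-1/k^2)^{d-1} < 1$ then finishes the argument cleanly. The paper instead reduces in the $d$-variable: it rewrites the inequality as $\bigl(\ell^d(k-1)^d + k^d\bigr)^{1/d} < (\ell-1)k$, invokes the monotonicity of $\ell^p$-norms ($\|\cdot\|_d \leq \|\cdot\|_2$ for $d \geq 2$) to reduce to the case $d = 2$, and then verifies the resulting quadratic inequality $\ell(2k-1) > 2k^2$ directly. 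Both are short and elementary; yours has the slight advantage of landing on a one-variable target whose factorization makes the conclusion transparent, while the paper's exploits the norm-monotonicity trick to get a particularly small final computation. Either would serve equally well in the paper.
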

\begin{proof}% [Proof of Lemma~\ref{lem:an_ineq}]
  Inequality~\eqref{eq:an_ineq} is equivalent to
  \begin{equation}
    \label{eq:an_ineq.2}
    \ell^d(k-1)^d+k^d<(\ell-1)^dk^d.
  \end{equation}
  That is, 
  \begin{equation}
    \label{eq:an_ineq.3}
    \big(\ell^d(k-1)^d+k^d\big)^{1/d}<(\ell-1)k.
  \end{equation}
  Since~$d\geq2$, we have
  \begin{equation}
    \label{eq:an_ineq.4}
    \big(\ell^d(k-1)^d+k^d\big)^{1/d}\leq\(\ell^2(k-1)^2+k^2\)^{1/2}.
  \end{equation}
  Therefore, it suffices to prove that~\eqref{eq:an_ineq.3} holds
  for~$d=2$, that is, 
  \begin{equation}
    \label{eq:an_ineq.5}
    \ell^2(k-1)^2+k^2<(\ell-1)^2k^2.
  \end{equation}
  A quick calculation shows that~\eqref{eq:an_ineq.5} is equivalent to
  \begin{equation}
    \label{eq:an_ineq.6}
%    \(k-{\ell\over2}\)^2-{\ell^2\over4}+{\ell\over2}<0.
    \(2k-1\)\ell > 2k^2.
  \end{equation}
  Since~$2<k+1\leq\ell$, we conclude that   (\ref{eq:an_ineq.5})
  holds, and this completes the proof.
\end{proof}

\medskip

\begin{proof}{\bf [Proof of Lemma~\ref{lem:an_equilibrium}]}
  Recall that, for any integer~$q$, we
  let~$q_-=q_{-\epsilon}=q/(1+\epsilon)$.  Let~$2\leq k<\ell$ be
  integers and suppose~$\cP$ includes a packing~$\cU(k)$ of
  type~$\cH_k^d(\epsilon)$ and a packing~$\cU(\ell)$ of
  type~$\cH_\ell^d(\epsilon)$.  It suffices to show that the cost of a
  copy of~$Q_\ell^d(\epsilon)$ within~$\cU(\ell)$ is smaller than the
  cost that it would incur if it were moved into~$\cU(k)$.  Recall
  that the volume of~$Q_k^d(\epsilon)$
  is~$(1+\epsilon)^d/k^d=1/k_-^d$.  Thus, we have to show the
  following inequality:
  \begin{equation}
    \label{eq:Nash_ineq}
    {1/\ell_-^d\over(k-1)^d/k_-^d+1/\ell_-^d}
    >{1/\ell_-^d\over(\ell-1)^d/\ell_-^d},
  \end{equation}
  which is equivalent to
  \begin{equation}
    \label{eq:Nash_ineq.2}
    \(k-1\over k_-\)^d+{1\over\ell_-^d}<\(\ell-1\over\ell_-\)^d,
  \end{equation}
  which, in turn, is equivalent to
  \begin{equation}
    \label{eq:Nash_ineq.3}
    \(k-1\over k\)^d+{1\over\ell^d}<\(\ell-1\over\ell\)^d
  \end{equation}
  (recall that~$q_-=q/(1+\epsilon)$).
    Inequality~\eqref{eq:Nash_ineq.3} is precisely
    inequality~\eqref{eq:an_ineq} asserted in Proposition~\ref{prop:an_ineq}.
\end{proof}

\medskip

\begin{proof}\textbf{[Proof of Lemma~\ref{lem:PoA_from_good}]}
  Let $\cU$ be a packing as in the statement of
  Lemma~\ref{lem:PoA_from_good}. 
  We shall again use the quantities~$N$ and~$N_{\hat k}$ as defined
  in~\eqref{eq:N_def} and~\eqref{eq:N_sigma_def}.
  Consider the packing~$\cP=(\cU_1,\dots,\cU_N)$, where each~$\cU_n$
  ($1\leq n\leq N$) is a copy of~$\cU$.  Fix~$k\in K(\cU)$.
  Let~$\nu_k(\cP)$ be the total number of copies of~$Q_k^d(\epsilon)$
  in~$\cP$.  Then~$\nu_k(\cP)=N\nu_k(\cU)=N_{\hat k}(k-1)^d\nu_k(\cU)$.
  In view of~\eqref{eq:epsilon_upbdd} (recall the observation just
  after Definition~\ref{def:cH}), the~$\nu_k(\cP)$ copies
  of~$Q_k^d(\epsilon)$ in~$\cP$ may be arranged
  into~$\nu_k(\cP)/(k-1)^d=N_{\hat k}\nu_k(\cU)$ copies of packings of
  type~$\cH_k^d(\epsilon)$.  Doing this for every~$k\in K(\cU)$, and
  taking the resulting collection of packings of
  type~$\cH_k^d(\epsilon)$ ($k\in K(\cU)$), we obtain a packing~$\cP'$
  of all the hypercubes in~$\cP$.  Clearly,
  \begin{enumerate}[label=\rmlabel]
    %[label={\rm(\textit{\roman*})}]
  \item\label{enum:cP'.i} $\cP'$ is a packing of
    type~$\cH^d(\epsilon)$ and
  \item\label{enum:cP'.ii} the number of bins in~$\cP'$ is
    \begin{equation}
      \label{eq:cP'_size}
      |\cP'| = \sum_{k\in K(\cU)}N_{\hat k}\nu_k(\cU)
      =N\sum_{k\in K(\cU)}(k-1)^{-d}\nu_k(\cU)
      =N\val(\cU).
    \end{equation}
  \end{enumerate}
  Lemma~\ref{lem:an_equilibrium} and~\ref{enum:cP'.i} tell us
  that~$\cP'$ is a Nash equilibrium.  On the other hand, the fact
  that~$\cP$ uses~$N$ bins and~\ref{enum:cP'.ii} tell us that
  \begin{equation}
    \label{eq:PoA_lwbd_instance}
    \PoA(d)\geq{|\cP'|\over|\cP|}={N\val(\cU)\over N}=\val(\cU),
  \end{equation}
  as required. 
\end{proof}

%%%%%%%%%%%%%%%%%%%%%%%%%%%%%%%%%%%%%%%%%%%%%%%%%

\bigskip

\section{Proof of Theorem~\ref{thm:SPoA_lwbd} stated in
  Section~\ref{sec:introduction}} 
\label{sec:pfofThm.strong_PoA_lwbd}

%\noindent\tentative
Theorem~\ref{thm:SPoA_lwbd} follows from
Lemma~\ref{lem:large_value_pack2+} and the next two lemmas.
These lemmas are the  analogue of Lemmas~\ref{lem:an_equilibrium}
and~\ref{lem:PoA_from_good} for the strong price of anarchy. 

\begin{definition}[Packings of type~$\cH_{2^+}^d(\epsilon)$] 
  \label{def:cH2}
  A packing~$\cP=(\cU_1,\cU_2,\dots)$ is said to be of
  \textit{type~$\cH_{2^+}^d(\epsilon)$} if for each~$i$ there is some
  integer~$k\geq1$ such that~$\cU_i$ is a packing of
  type~$\cH_{2^k}^d(\epsilon)$.
\end{definition}

\begin{lemma} % strong NE == corresp. Lemma 3 
  \label{lem:a_S_equilibrium}
  Let~$d\geq2$ and~$\epsilon>0$ be given.  Any
  packing~$\cP=(\cU_1,\cU_2,\dots)$ of
  \textit{type~$\cH_{2^+}^d(\epsilon)$} is a strong Nash equilibrium.
\end{lemma}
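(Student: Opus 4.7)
The plan is to argue by contradiction. Suppose a coalition $C$ of players can jointly reduce their costs by moving to a new configuration $\cP'$, with non-coalition cubes staying put. Each cube $c\in C$ was originally in a full bin of type $\cH_{2^{\ell(c)}}^d(\epsilon)$ (so $c$ is a copy of $Q_{2^{\ell(c)}}^d(\epsilon)$) and had cost $1/(2^{\ell(c)}-1)^d$ in $\cP$. The improvement hypothesis gives, for each $c\in C$ with new bin $B_c$, the volume constraint
\[
V(B_c) > (1+\epsilon)^d\bigl(1-2^{-\ell(c)}\bigr)^d =: M_{\ell(c)}.
\]
I would then focus on a smallest cube $c^*\in C$, one at the maximum level $\ell^* := \max_{c\in C}\ell(c)$, whose new bin $B^*$ must satisfy $V(B^*)>M_{\ell^*}$.

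The key technical ingredient would be a \emph{Packing Bound}: any valid packing of hypercubes with sides of the form $(1+\epsilon)/2^k$ into the unit bin, with largest level present $\ell_{\max}$, has total volume at most $M_{\ell_{\max}}$. I would prove this by induction on~$d$. In dimension~$1$, writing the total length as $(1+\epsilon)\sum_k n_k 2^{-k}$ with $n_k\in\ZZ_{\geq0}$, the packing constraint forces the integer $\sum_k n_k 2^{\ell_{\max}-k}$ to be strictly less than $2^{\ell_{\max}}$, hence at most $2^{\ell_{\max}}-1$, which yields the bound. In higher~$d$, I would slice along one axis and apply the inductive hypothesis on cross-sections, using the $1$D case to control the measure of the support of the axis-projections.

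Applying the Packing Bound to~$B^*$ yields $\ell_{\max}(B^*)>\ell^*$, where $\ell_{\max}(B^*)$ denotes the maximum cube level in $B^*$. Since no $C$-cube has level above~$\ell^*$, the cube at level $\ell_{\max}(B^*)$ in $B^*$ must be a non-coalition cube that did not move; hence $B^*$ was originally a full level-$k_{B^*}$ bin with $k_{B^*}=\ell_{\max}(B^*)>\ell^*$. Any cube that could have vacated $B^*$ would be in $C$ at level $k_{B^*}>\ell^*$, contradicting the maximality of~$\ell^*$. Thus $B^*$ still contains all its $(2^{k_{B^*}}-1)^d$ original cubes, and the remaining ``L-shaped'' region admits only inscribed cubes of side at most $\delta:=(1-\epsilon(2^{k_{B^*}}-1))/2^{k_{B^*}}$. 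The chain $1-\epsilon(2^{k_{B^*}}-1)\leq 1<2\leq 2(1+\epsilon)\leq (1+\epsilon)\cdot 2^{k_{B^*}-\ell^*}$ gives $\delta<(1+\epsilon)/2^{\ell^*}$, so $c^*$ does not fit in~$B^*$, the desired contradiction.

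The hard part will be the Packing Bound: although the improvement appears volume-feasible (since $M_{\ell^*}<1$ in our regime), the dyadic integrality combined with the geometric packing constraints is what yields the tight bound $M_{\ell_{\max}}$, and a naive volume argument only gives the weaker bound~$1$. Once this is established, the rest extends the single-player argument of Lemma~\ref{lem:an_equilibrium} to coalitions via the smallest-cube geometric obstruction.
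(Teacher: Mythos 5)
Your proof has the same underlying ingredients as the paper's and is essentially a reorganization of it: isolate the smallest coalition cube $c^*$ (maximal level $\ell^*$), use a volume bound of the form $V\leq M_{\ell_{\max}}$ for dyadic packings, and use the geometric fact that a full homogeneous bin at a level finer than $\ell^*$ has no room for $c^*$. The paper applies these in the opposite order: it first shows, using minimality of $c^*$ together with the non-fit observation, that every cube remaining in the destination bin has side at least $s(c^*)$, and then tiles them into $\ell^*$-level cubes and counts to contradict the volume gain; you instead first invoke the Packing Bound to force a cube at level $>\ell^*$ to be present, deduce that the destination bin was originally a full finer bin that lost nothing, and then reach the non-fit contradiction. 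Both orderings close the argument, and your chain giving $\delta<(1+\epsilon)/2^{\ell^*}$ is correct.

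The genuine gap is in the proposed proof of the Packing Bound, which you rightly flag as the crux. The statement is true, but the induction on $d$ does not go through as sketched: after applying the inductive hypothesis to cross-sections, you would need the measure of the support of the $x_1$-projections to be at most $m:=(1+\epsilon)(1-2^{-\ell_{\max}})$, and that fails. For example, with $d=2$, $\ell_{\max}=2$ and $\epsilon$ small, four pairwise-disjoint cubes of side $(1+\epsilon)/4$ in a staggered brick arrangement have $x_1$-projections whose union is all of $(0,1)$, so the support has measure $1>3/4\approx m$ while the packed volume is far below $M_2$; the product of the cross-section bound and the support bound therefore overshoots. The correct (and dimension-free) proof is the tiling argument: since every side is a dyadic multiple of $(1+\epsilon)/2^{\ell_{\max}}$, tile each cube at level $k<\ell_{\max}$ into $2^{(\ell_{\max}-k)d}$ cubes at level $\ell_{\max}$; this preserves disjointness and total volume and produces a valid packing of $\ell_{\max}$-level cubes, of which at most $(2^{\ell_{\max}}-1)^d$ fit in the unit bin, giving $V\leq M_{\ell_{\max}}$. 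This is exactly the replacement-and-count step the paper performs inside $B'$; once you substitute it for the induction, your reorganized proof is complete.
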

\begin{proof} %[Proof of Lemma~\ref{lem:a_S_equilibrium}]
  The proof is by contradiction. Suppose there is a packing $\cP$ of
  type $\cH_{2^+}^d(\epsilon)$ that is not a strong
  Nash equilibrium. 
  Recall that $\cP$ is a homogeneous packing and each bin
  in $\cP$ has the maximum number of identical copies
  of~$Q_k^d(\epsilon)$, where $k=2^t$, for some $t$.
  Let $S$ be a set of hypercubes (coalition) that
  can migrate to another bin, say $B'$,  decreasing the cost of each of its
  items.
  Let $h$ be a smallest hypercube in $S$. Suppose $h$ is in bin $B$,
  and has side length $s(h)$.  Since $h$ can migrate to bin $B'$, it
  follows that $s(h)$ is smaller than (and also divides) the side
  length $s(h')$ of any item $h'$ in $B'$. Thus, we can replace each
  hypercube $h'$ originally in $B'$ by $(s(h')/s(h))^d$ hypercubes of
  side length $s(h)$, with the same total volume. Likewise, each
  hypercube $h'\in S$ that migrates to $B'$ can be replaced by
  $(s(h')/s(h))^d$ hypercubes of side length $s(h)$.  After this
  replacement, the new equivalent packing configuration of bin $B'$
  has only hypercubes of side length $s(h)$, and has an occupied
  volume larger than the occupied volume of bin $B$ (before the
  migration), a contradiction, because bin $B$ had the best possible
  occupied volume with items of side length $s(h)$.
\end{proof}
   
 %%%%%%%%%%%%%%%%%%%%

\begin{lemma}  % corresp. Lemma 4 
  \label{lem:SPoA_from_good2+}
  If~$\cU$ is a packing of $d$-hypercubes into a unit bin of type
  $(1+\epsilon)\ZZ_{2^+}^{-1}$, where
  \begin{equation}
    \label{eq:strong_epsilon_upbdd}
    0<\epsilon\leq{1\over k_{\max}(\cU)-1},
  \end{equation}
  then~$\SPoA(d)\geq\val(\cU)$.
\end{lemma}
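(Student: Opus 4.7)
The plan is to follow the proof of Lemma~\ref{lem:PoA_from_good} almost verbatim, replacing the appeal to Lemma~\ref{lem:an_equilibrium} with an appeal to Lemma~\ref{lem:a_S_equilibrium}, and observing that the restriction of~$\cU$ being of type~$(1+\epsilon)\ZZ_{2^+}^{-1}$ guarantees that the homogeneous rearrangement we construct is of type~$\cH_{2^+}^d(\epsilon)$, rather than just~$\cH^d(\epsilon)$.

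More concretely, first I would set~$N=\prod_{k\in K(\cU)}(k-1)^d$ and $N_{\hat k}=N/(k-1)^d$ exactly as in~\eqref{eq:N_def} and~\eqref{eq:N_sigma_def}. Then I would form the packing~$\cP=(\cU_1,\dots,\cU_N)$ consisting of~$N$ copies of~$\cU$, so that for each~$k\in K(\cU)$ the total number of copies of~$Q_k^d(\epsilon)$ appearing in~$\cP$ is~$N\nu_k(\cU)=N_{\hat k}(k-1)^d\nu_k(\cU)$, a multiple of~$(k-1)^d$. The hypothesis~\eqref{eq:strong_epsilon_upbdd} is exactly what ensures that~$(k-1)^d$ copies of~$Q_k^d(\epsilon)$ fit into one unit bin (as noted after Definition~\ref{def:cH}), so these copies can be repacked into exactly~$N_{\hat k}\nu_k(\cU)$ bins each of type~$\cH_k^d(\epsilon)$. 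Doing this for every~$k\in K(\cU)$ yields a packing~$\cP'$ of all the hypercubes in~$\cP$.

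The crucial observation is that, since~$\cU$ is of type~$(1+\epsilon)\ZZ_{2^+}^{-1}$, we have~$K(\cU)\subseteq\ZZ_{2^+}$, and therefore every bin of~$\cP'$ is of type~$\cH_{2^k}^d(\epsilon)$ for some integer~$k\geq1$. By Definition~\ref{def:cH2}, this means~$\cP'$ is of type~$\cH_{2^+}^d(\epsilon)$, and Lemma~\ref{lem:a_S_equilibrium} gives that~$\cP'$ is a strong Nash equilibrium. A bin count identical to~\eqref{eq:cP'_size} shows
\begin{equation*}
 |\cP'|=\sum_{k\in K(\cU)}N_{\hat k}\nu_k(\cU)
 =N\sum_{k\in K(\cU)}(k-1)^{-d}\nu_k(\cU)
 =N\val(\cU),
\end{equation*}
whereas~$|\cP|=N$ bins suffice to pack the same items. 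Hence, applying the definition of strong price of anarchy to the instance given by~$\cP'$ together with the optimal packing~$\cP$, we obtain~$\SPoA(d)\geq|\cP'|/|\cP|=\val(\cU)$, as required.

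There is no genuine obstacle: the argument is structurally the same as for~$\PoA$, and the only real content is bookkeeping plus the verification that the $\ZZ_{2^+}$ restriction on~$\cU$ lifts to the desired~$\cH_{2^+}^d(\epsilon)$ structure of~$\cP'$, which is immediate from the construction. The asymptotic nature of~$\SPoA(d)$ is handled by the standard trick (already implicit in the proof of Lemma~\ref{lem:PoA_from_good}) of scaling: if the lower bound is not yet large enough, one takes many disjoint copies of the pair~$(\cP,\cP')$ to make~$\opt$ arbitrarily large while preserving the ratio.
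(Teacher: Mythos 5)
Your proof is correct and mirrors exactly what the paper intends: the paper simply states that Lemma~\ref{lem:SPoA_from_good2+} follows from Lemma~\ref{lem:a_S_equilibrium} in the same way that Lemma~\ref{lem:PoA_from_good} follows from Lemma~\ref{lem:an_equilibrium}, and your write-up carries out precisely that substitution, including the (correct) observation that the~$\ZZ_{2^+}$ restriction on~$\cU$ makes the repacked~$\cP'$ of type~$\cH_{2^+}^d(\epsilon)$.
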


Lemma~\ref{lem:SPoA_from_good2+} follows from
Lemma~\ref{lem:a_S_equilibrium}, in the same way that
Lemma~\ref{lem:PoA_from_good} follows from
Lemma~\ref{lem:an_equilibrium}.

%%%%%%

The proof of Theorem~\ref{thm:SPoA_lwbd} follows from
Lemmas~\ref{lem:large_value_pack2+} and~\ref{lem:SPoA_from_good2+} in
the same way that Theorem~\ref{thm:PoA_lwbd} follows from
Lemmas~\ref{lem:large_value_pack} and~\ref{lem:PoA_from_good}.

%%%%%%%%%%%%%%%%

\section{Proofs of Section~\ref{sec:packing_proof}  
(Proof of Lemma~\ref{lem:large_value_pack})}
\label{sec:languages_lemma}

\begin{proof} \textbf{[Proof of Fact~\ref{fact:gap}]}
  Assertion~\ref{enum:fact_gap.ii} is clear
  (recall~\eqref{eq:x^d(j).2}).  The second assertion
  in~\ref{enum:fact_gap.i} follows from inequality~\eqref{eq:gap}, and
  therefore it suffices to verify that inequality.  We
  have~$y^{(k)}(k-1)=x^{(k)}(k-1)+1/k_-=(k-1)/k_-=1+\epsilon-(1+\epsilon)/k$.
  Moreover, $x^{(k')}(k')=1-1/(k')_-=1-(1+\epsilon)/k'$.  Therefore,
  \eqref{eq:gap}~is equivalent to
  \begin{equation}
    \label{eq:gap.1}
    \epsilon<(1+\epsilon)\({1\over k}-{1\over k'}\).
  \end{equation}
  Since~$k+1\leq k'\leq S$ and~$\epsilon\leq S^{-2}$,
  inequality~\eqref{eq:gap.1} does hold.
\end{proof}

%%%%%%%%%%%%%%%%%%%%%%%%%%%%%%%%%%%%%%%%%%%%%%%%%%%%%%%%%%%%%%%%

\medskip

%%%%%%%%%%% proof of antigo Lemma 10

\begin{proof}\textbf{[Proof of Lemma~\ref{lem:indeed_packs}]}
  Let us check that the~$Q(w)$ in~$\cU_\epsilon$ are indeed pairwise
  disjoint.  Let~$w=(w_i)_{1\leq i\leq d}\in L_k$
  and~$w'=(w_i')_{1\leq i\leq d}\in L_{k'}$
  with~$2\leq k\leq k'\leq S$ with~$w\neq w'$ be given, and
  consider~$Q(w)=Q^{(k)}(w)$ and~$Q(w')=Q^{(k')}(w')$.  We have to
  show that
  \begin{equation}
    \label{eq:Qs_disjoint}
    Q(w)\cap Q(w')=\emptyset.
  \end{equation}

  Suppose first that~$k=k'$.  In that case, both~$w$ and~$w'$ are
  in~$L_k=L_{k'}$.  Since~$w\neq w'$, there is some~$1\leq i\leq d$
  such that~$w_i\neq w_i'$.  Furthermore, since~$L_k$ is gapped,
  either~$k-1$ or~$k$ is missed by~$L_k$ at~$i$.  In particular, the
  pair~$\{w_i,w_i'\}$ cannot be the pair~$\{k-1,k\}$ and
  therefore
  \begin{equation}
    \label{eq:disjoint_Ii}
    I^{(k)}(w_i)\cap I^{(k)}(w_i')=\emptyset    
  \end{equation}
  (recall Fact~\ref{fact:gap}\ref{enum:fact_gap.ii}).
  Expression~\eqref{eq:Q(w)_also} applied to~$Q(w)$ and~$Q(w')$,
  together with~\eqref{eq:disjoint_Ii},
  confirms~\eqref{eq:Qs_disjoint} when~$k=k'$.  

  Suppose now that~$k<k'$.  Since~$L_k$ and~$L_{k'}$ are separated,
  there is some~$1\leq i_0\leq d$ such that~$w_{i_0}<k<k'=w_{i_0}'$.
  Fact~\ref{fact:gap}\ref{enum:fact_gap.i} tells us that
  \begin{equation}
    \label{eq:disjoint_Ii0}
    I^{(k)}(w_{i_0})\cap I^{(k')}(w_{i_0}')=\emptyset.
  \end{equation}
  Expression~\eqref{eq:Q(w)_also} applied to~$Q(w)$ and~$Q(w')$,
  together with~\eqref{eq:disjoint_Ii0},
  confirms~\eqref{eq:Qs_disjoint} in this case also.  We therefore
  conclude that~$\cU_\epsilon$ is indeed a packing.

  The hypercubes in~$\cU_\epsilon$ are copies of the
  hypercubes~$Q_k^d(\epsilon)$ for~$2\leq k\leq S$, and
  therefore~$\cU_\epsilon$ is a packing of
  type~$(1+\epsilon)\ZZ_{\geq2}^{-1}$.  This concludes the proof of
  assertion~\ref{lem:indeed_packs}(\textrm{i}).
  Assertions~\ref{lem:indeed_packs}(\textrm{ii})
  and~\ref{lem:indeed_packs}(\textrm{iii}) are clear. 
  % Assertions~\ref{enum:indeed.ii} and~\ref{enum:indeed.iii} are clear.
\end{proof}

Lemma~\ref{lem:good_exists} is an immediate corollary of
Lemmas~\ref{lem:sep_langs} and~\ref{lem:indeed_packs}.

\begin{proof}\textbf{[Proof of Lemma~\ref{lem:large_value_pack}]}
  Let~$\cU$ be as given in Lemma~\ref{lem:good_exists}.  We claim
  that~$\cU$ will do.  In fact, 
  % To see this, let us invoke
  % Fact~\ref{fact:good_value} with parameters~$\epsilon=S^{-2}$,
  % $T=S-1$ and~$\eta=10/11$.  This tells us that
\begin{equation}
    \label{eq:main_thm.pf}
    \val(\cU)\geq {10\over11}\(\llceil2d\over9\log d\rrceil-1\)
    \geq{d\over5\log d},
  \end{equation}
  where the last inequality holds as long as~$d$ is large enough.
  This completes the proof of Lemma~\ref{lem:large_value_pack}.
\end{proof}

%%%%%%%%%%%%%%%%%%%%%%%%%%%%%%%%%%%%%%%%%%%%%%%%%%%%%%%%%%%%%

\section{Proof of Fact~\ref{fact:cF}} 
\label{sec:proof-of-set-system-fact}
%\begin{proof}\textbf{[Proof of Fact~\ref{fact:cF}]}
  Let~$r=\lceil d/2\rceil$.  We select each~$F_k$ ($1\leq k\leq d$)
  among the $r$-element subsets of~$[d]$ uniformly at random, with
  each choice independent of all others.  Let~$s=7d/26$.
  Note that, for any~$k\neq k'$, we have~$\EE(|F_k\cap
  F_{k'}|)=r^2/d$.  Let~$\lambda=r^2/d$.  Let
  \begin{equation}
    \label{eq:t_from_s}
    t=s-\lambda\geq s-(d/2+1)^2/d
    \geq{7d\over26}-{1\over d}\({d^2\over4}+d+1\)
    \geq{d\over52}-2\geq{d\over53},
  \end{equation}
  as long as~$d$ is large enough.  We may now apply a Chernoff bound
  for the hypergeometric distribution (see, for example,
  \cite[Theorem~2.10, inequality~(2.12)]{janson00:_random_graph}) to
  see that
  \begin{equation}
    \label{eq:hyper_dev}
    \PP(|F_k\cap F_{k'}|\geq s)
    =\PP(|F_k\cap F_{k'}|\geq\lambda+t)
    \leq\exp\(-{2(d/53)^2\over\lceil d/2\rceil}\)
    \leq\ee^{-{3d/53^2}}
    %\exp\(-{3\over53^2}d\)
  \end{equation}
  for every large enough~$d$.  Therefore, the expected number of
  pairs~$\{k,k'\}$ ($1\leq k<k'\leq d$) for which~$|F_k\cap
  F_{k'}|\geq s$ is less than~$d^2\exp(-3d/53^2)$, 
  % \begin{equation}
  %   \label{eq:hyper_dev.2}
  %   d^2\exp\(-{3\over53^2}d\),
  % \end{equation}
  which tends to~$0$ as~$d\to\infty$.  Therefore, for any large
  enough~$d$, a family of sets~$F_1,\dots,F_d$ as required does exist.
%\end{proof}

%%%%%%%%%%%%%%% SPOA %%%%%%%%%%%%%%%%%%%%%

\section{Proof of Lemma 2}

\subsubsection{The packing~$\cU_{\epsilon,2^+}$.}
\label{sec:packing-cu_epsilon2+}
The construction of~$\cU_{\epsilon,2^+}=\cU_{\epsilon,2^+}(\cL)$ will
be based on a variant of Lemma~\ref{lem:sep_langs} (namely,
Lemma~\ref{lem:sep_langs2+}), to be stated in a short while.  Let
\begin{equation}
  \label{eq:S'_def}
  S'=\llceil\log_2d-\log_2\log d-3\rrceil.
\end{equation}
For~$2\leq k\leq S'$, let~$t(k)=2^{k-1}$.
% \begin{equation}
%   \label{eq:t(k)_def}
%   t(k)=2^{k-1}.
% \end{equation}
Moreover, let~$T(S')=\{t(k)\:2\leq k\leq
S'\}=\{2,2^2,\dots,2^{S'-1}\}$. 

\begin{lemma}[Many large, separated gapped languages (variant)]
  \label{lem:sep_langs2+}
  There is an absolute constant~$d_0$ such that, for any~$d\geq d_0$,
  there is a separated family~$\cL=(L_t)_{t\in T(S')}$ of gapped
  $t$-languages~$L_t\subset[t]^d$, where~$S'$ is as
  in~\eqref{eq:S'_def}, and
  \begin{equation}
    \label{eq:sep_langs.Ls+}
    |L_t|\geq{10\over11}(t-1)^d
  \end{equation}
  for every~$t\in T(S')$.
\end{lemma}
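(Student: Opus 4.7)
The plan is to adapt the proof of Lemma~\ref{lem:sep_langs} with only minor parameter changes, exploiting the fact that the target index set $T(S')$ has cardinality merely $\Theta(\log d)$ rather than $\Theta(d/\log d)$, which only makes the probabilistic union bound more forgiving.

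First I would apply Fact~\ref{fact:cF} to obtain sets $F_1,\dots,F_d\subset[d]$ with $|F_k|=\lceil d/2\rceil$ and $|F_k\cap F_{k'}|<7d/26$ for $k\neq k'$, and then keep only the subfamily $\{F_t : t\in T(S')\}$. For each $t\in T(S')$, mirroring~\eqref{eq:L_k'.1} and~\eqref{eq:L_k.1}, I would select $L_t'\subset([t]\setminus\{t-1\})^{F_t}$ (to be specified shortly) and set $L_t=L_t'\times[t-1]^{[d]\setminus F_t}$. The language $L_t$ is automatically gapped ($t-1$ is missed at every $i\in F_t$ and $t$ is missed at every $i\in[d]\setminus F_t$), and the desired bound $|L_t|\geq(10/11)(t-1)^d$ is equivalent to $|L_t'|\geq(10/11)(t-1)^{|F_t|}$.

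To construct $L_t'$ I would replay the random-word argument. For $t'\in T(S')$ with $t'<t$, set $J(t',t)=F_t\setminus F_{t'}$, so that $|J(t',t)|>3d/13$. A uniformly random $v\in([t]\setminus\{t-1\})^{F_t}$ is \emph{$t'$-bad} if $v_i\neq t$ for every $i\in J(t',t)$, and \emph{bad} if it is $t'$-bad for some $t'\in T(S')$ with $t'<t$. The key point, and the reason $S'$ is chosen as in~\eqref{eq:S'_def.1}, is that $t-1\leq 2^{S'-1}\leq d/(8\log d)$, so
\begin{equation*}
  \PP(v\text{ is $t'$-bad})
  \leq\bigl(1-1/(t-1)\bigr)^{|J(t',t)|}
  \leq\exp\bigl(-|J(t',t)|/(t-1)\bigr)
  \leq d^{-c}
\end{equation*}
for some absolute $c>1$ (for all large enough $d$). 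Since $|T(S')|\leq\log_2 d$, a union bound gives $\PP(v\text{ is bad})\leq\log_2 d\cdot d^{-c}\leq 1/11$ for all sufficiently large $d$. I then let $L_t'$ be the set of the at least $(10/11)(t-1)^{|F_t|}$ good words.

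The separation of $\cL=(L_t)_{t\in T(S')}$ is verified exactly as in Claim~\ref{claim:L_k'_good}: given $t'<t$ in $T(S')$, $u\in L_{t'}$ and $w\in L_t$, the restriction $(w_i)_{i\in F_t}$ lies in $L_t'$ and hence is not $t'$-bad, producing some $i_0\in F_t\setminus F_{t'}$ with $w_{i_0}=t$; since $i_0\notin F_{t'}$, the construction of $L_{t'}$ forces $u_{i_0}\leq t'-1<t'<t=w_{i_0}$, as required. There is no real new obstacle; the only substantive point is that the correction $-\log_2\log d-3$ in~\eqref{eq:S'_def.1} is tuned precisely so that $2^{S'-1}$ is small enough for the per-pair bad-event probability to beat $d^{-1}$, after which the $O(\log d)$-size union bound easily clears the $1/11$ threshold.
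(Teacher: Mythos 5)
Your proof is correct and follows essentially the same route the paper intends: the paper explicitly omits the argument as ``very similar'' to that of Lemma~\ref{lem:sep_langs}, and your write-up is exactly that adaptation, with the one genuinely new computation — that $2^{S'-1}\leq d/(8\log d)$ forces the per-pair bad probability below $d^{-c}$ for some $c>1$ — carried out correctly and matched to the $O(\log d)$-size union bound.
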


The proof of Lemma~\ref{lem:sep_langs2+} is very similar to the
proof of Lemma~\ref{lem:sep_langs}, and is omitted here.  With
Lemma~\ref{lem:sep_langs2+} at hand, we may define the
packing~$\cU_{\epsilon,2^+}$.  Let languages~$L_t$ ($t\in T(S')$) as
in Lemma~\ref{lem:sep_langs2+} be fixed.  For each~$w\in L_t$, we
consider~$x[w]=x^{(t)}[w]$ as defined in~\eqref{eq:x(w)_def}, namely,
\begin{equation}
  \label{eq:x(w)_def2}
  x[w]=x^{(t)}[w]=(x^{(t)}(w_1),\dots,x^{(t)}(w_d))\in\RR^d.
\end{equation}
Furthermore, we consider~$Q(w)=Q^{(t)}(w)$ as defined
in~\eqref{eq:Q(w)_def}, namely,
\begin{equation}
  \label{eq:Q(w)_def.2}
  Q(w)=Q^{(t)}(w)=x^{(t)}(w)+Q_t^d(\epsilon)\subset[0,1]^d.
\end{equation}
We now define the packing~$\cU_{\epsilon,2^+}$.

\begin{definition}[Packing~$\cU_{\epsilon,2^+}=\cU_{\epsilon,2^+}(\cL)$]
  \label{def:cU_epsilon2+(cL)}
  Suppose $\cL=(L_t)_{t\in T(S')}$ is a family of separated, gapped
  $t$-languages~$L_t\subset[t]^d$.  Let~$0<\epsilon\leq2^{-2(S'-1)}$.
  Define the packing~$\cU_{\epsilon,2^+}=\cU_{\epsilon,2^+}(\cL)$ as
  follows.  For each~$t\in T(S')$ and each~$w\in L_t$, place the
  copy~$Q(w)=Q^{(t)}(w)\subset[0,1]^d$ of~$Q_t^d(\epsilon)$
  in~$\cU_{\epsilon,2^+}$.
\end{definition}

We now state, without proof, the analogue of 
Lemma~\ref{lem:indeed_packs} for~$\cU_{\epsilon,2^+}$

\begin{lemma}
  \label{lem:indeed_packs2+}
  Suppose $\cL=(L_t)_{t\in T(S')}$ is a family of separated, non-empty
  gapped $t$-languages~$L_t\subset[t]^d$.
  Suppose~$0<\epsilon\leq2^{-2(S'-1)}$.
  Let~$\cU_{\epsilon,2^+}=\cU_{\epsilon,2^+}(\cL)$ be the family of
  all the hypercubes~$Q(w)=Q^{(t)}(w)\subset[0,1]^d$ with~$w\in L_t$
  and~$t\in T(S')$.  Then the following assertions hold.
  \begin{enumerate}[label=\rmlabel]
    %{\rm(\textit{\roman*})}]
  \item\label{enum:indeed.i2+} The hypercubes in~$\cU_{\epsilon,2^+}$ are
    pairwise disjoint and form a packing of
    type $(1+\epsilon)\ZZ_{2^+}^{-1}$. 
  \item\label{enum:indeed.ii2+} For every~$t\in T(S')$, we
    have~$\nu_t(\cU_{\epsilon,2^+})=|L_t|$.
  \item\label{enum:indeed.iii2+} We
    have~$|K(\cU_{\epsilon,2^+})|=S'-1$.
  \end{enumerate}
\end{lemma}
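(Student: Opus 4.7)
The plan is to mimic the proof of Lemma~\ref{lem:indeed_packs} with~$T(S')=\{2,4,\ldots,2^{S'-1}\}$ playing the role of~$\{2,3,\ldots,S\}$. The three assertions then break down in exactly the same way: \ref{enum:indeed.i2+} disjointness reduces to showing that two distinct hypercubes are separated in at least one coordinate; \ref{enum:indeed.ii2+} and~\ref{enum:indeed.iii2+} follow immediately from the fact that each~$w\in L_t$ produces a distinct copy~$Q^{(t)}(w)$ and that the languages~$L_t$ are non-empty.

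The one ingredient that must be re-verified is the analogue of Fact~\ref{fact:gap}\ref{enum:fact_gap.i}: for any~$k<k'$ in~$T(S')$ we require~$y^{(k)}(k-1)<x^{(k')}(k')$, which unwinds to~$\epsilon<(1+\epsilon)(1/k-1/k')$. Because~$k$ and~$k'$ are distinct powers of~$2$ with~$k'\geq 2k$ and~$k\leq 2^{S'-2}$, the right-hand side is at least~$(1+\epsilon)/(2k)\geq(1+\epsilon)\cdot 2^{-(S'-1)}$, which comfortably exceeds~$\epsilon$ under the hypothesis~$\epsilon\leq 2^{-2(S'-1)}$. Fact~\ref{fact:gap}\ref{enum:fact_gap.ii} concerns only a single value of~$k$ and is independent of~$\epsilon$, so it applies verbatim to each~$t\in T(S')$.

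With these two facts in place, pairwise disjointness of the hypercubes in~$\cU_{\epsilon,2^+}$ is handled in the same two cases as before. Given distinct~$Q^{(t)}(w)$ and~$Q^{(t')}(w')$ with~$t\leq t'$ in~$T(S')$: if~$t=t'$, pick~$i$ with~$w_i\neq w_i'$ and note that the gapped property of~$L_t$ forbids~$\{w_i,w_i'\}=\{t-1,t\}$, so~$I^{(t)}(w_i)\cap I^{(t)}(w_i')=\emptyset$; if~$t<t'$, separatedness of~$\cL$ furnishes~$i_0$ with~$w_{i_0}<t<t'=w_{i_0}'$, and the analogue above yields~$I^{(t)}(w_{i_0})\cap I^{(t')}(w_{i_0}')=\emptyset$. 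In either case~\eqref{eq:Q(w)_also} forces~$Q^{(t)}(w)\cap Q^{(t')}(w')=\emptyset$. Each cube in the packing is a copy of~$Q_t^d(\epsilon)$ for some~$t\in T(S')\subset\ZZ_{2^+}$, which yields the type statement in~\ref{enum:indeed.i2+}; distinct~$w\in L_t$ give distinct base points, giving~\ref{enum:indeed.ii2+}; and~$|T(S')|=S'-1$ together with the non-emptiness of each~$L_t$ yields~\ref{enum:indeed.iii2+}. The only nontrivial point is the gap inequality above, which the threshold~$2^{-2(S'-1)}$ has been chosen precisely to guarantee, so no genuinely new idea beyond the proof of Lemma~\ref{lem:indeed_packs} is required.
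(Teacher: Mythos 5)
Your proposal is correct and follows exactly the route the paper intends: the paper states Lemma~\ref{lem:indeed_packs2+} without proof precisely because it is the direct analogue of Lemma~\ref{lem:indeed_packs}, and the only thing to re-check is the gap inequality $\epsilon<(1+\epsilon)(1/k-1/k')$ under the new threshold $\epsilon\leq 2^{-2(S'-1)}$, which you verify cleanly using $k'\geq 2k$ and $k\leq 2^{S'-2}$ for distinct powers of two in~$T(S')$.
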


The following result %Lemma~\ref{lem:good_exists2+} 
is an immediate corollary of
Lemmas~\ref{lem:sep_langs2+} and~\ref{lem:indeed_packs2+}.

\begin{lemma}
  \label{lem:good_exists2+}
  There is an absolute constant~$d_0$ for which the following holds
  for any~$d\geq d_0$.  Let
  \begin{equation}
    \label{eq:S'_def.2}
    S'=\llceil\log_2d-\log_2\log d-3\rrceil
  \end{equation}
  and~$\epsilon=2^{-2(S'-1)}$.  The unit bin admits an
  $(S'-1,10/11)$-good packing~$\cU$ of type
  $(1+\epsilon)\ZZ_{2^+}^{-1}$ and with~$k_{\max}(\cU)=2^{S'-1}$.
\end{lemma}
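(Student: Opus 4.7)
The plan mirrors exactly the path used for Lemma~\ref{lem:large_value_pack}. First, I would appeal to Lemma~\ref{lem:sep_langs2+} to obtain a separated family $\cL=(L_t)_{t\in T(S')}$ of gapped $t$-languages with $|L_t|\geq(10/11)(t-1)^d$, where $T(S')=\{2,2^2,\dots,2^{S'-1}\}$ and $S'$ is as in~\eqref{eq:S'_def.1}. The combinatorial content -- existence of such languages -- is the probabilistic heart of the matter, but it is essentially already handled by the argument that proves Lemma~\ref{lem:sep_langs} (the only difference is that the index set $\{2,\dots,S\}$ is replaced by the geometric progression $T(S')$, which only makes the ``bad'' union bound easier since $|T(S')|=S'-1$ is smaller than $S-1$).

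Next, with $\epsilon=2^{-2(S'-1)}$, I would set $\cU=\cU_{\epsilon,2^+}(\cL)$ per Definition~\ref{def:cU_epsilon2+(cL)}. Observe that $\epsilon=k_{\max}^{-2}\leq 1/(k_{\max}-1)$ when $k_{\max}=2^{S'-1}\geq 2$, so the hypothesis of Lemma~\ref{lem:indeed_packs2+} is satisfied. Applying that lemma, I obtain that $\cU$ is a packing of type $(1+\epsilon)\ZZ_{2^+}^{-1}$, that $\nu_t(\cU)=|L_t|$ for every $t\in T(S')$, and that $k_{\max}(\cU)=2^{S'-1}$, matching the conclusion of the lemma in these respects.

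It then only remains to bound the weight. Using the definition of $\val$ together with the previous two steps,
\begin{equation}
\val(\cU)=\sum_{t\in T(S')}(t-1)^{-d}\nu_t(\cU)\geq\sum_{t\in T(S')}(t-1)^{-d}\cdot\tfrac{10}{11}(t-1)^d=\tfrac{10}{11}(S'-1).
\end{equation}
Finally, substituting $S'=\lceil\log_2 d-\log_2\log d-3\rceil$, a short arithmetic check using $1/\ln 2>11/10$ shows that $(10/11)(S'-1)\geq\log d$ for all sufficiently large $d$, completing the argument.

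The only step that requires any real work is Lemma~\ref{lem:sep_langs2+}, and even this will be a direct adaptation of the probabilistic construction already presented for Lemma~\ref{lem:sep_langs}: for each $t\in T(S')$ one picks $L_t'$ as the set of ``good'' words in $([t]\setminus\{t-1\})^{F_t}$, bounds the probability of a uniformly random word being ``$t'$-bad'' for some smaller $t'\in T(S')$ by a Chernoff estimate analogous to~\eqref{eq:PP_v_ell-bad}, and closes the union bound using $|T(S')|=O(\log d)$ (which is in fact better than in the proof of Lemma~\ref{lem:sep_langs}, where $|[2,S]|=\Theta(d/\log d)$). I do not anticipate any genuine obstacle beyond this bookkeeping; the structural work has all been done in Section~\ref{sec:words_fam} and the ``Packing'' paragraphs preceding the lemma.
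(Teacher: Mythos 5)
Your proposal is correct and follows essentially the same route as the paper: invoke Lemma~\ref{lem:sep_langs2+} to obtain the separated gapped family $\cL=(L_t)_{t\in T(S')}$, build $\cU_{\epsilon,2^+}(\cL)$ via Definition~\ref{def:cU_epsilon2+(cL)}, feed it through Lemma~\ref{lem:indeed_packs2+}, and read off $\val(\cU)\geq(10/11)(S'-1)$ from $\nu_t(\cU)=|L_t|\geq(10/11)(t-1)^d$. (Note only that your final arithmetic step $(10/11)(S'-1)\geq\log d$ properly belongs to the derivation of Lemma~\ref{lem:large_value_pack2+} from Lemma~\ref{lem:good_exists2+}, and that the hypothesis of Lemma~\ref{lem:indeed_packs2+} you need to verify is $0<\epsilon\leq 2^{-2(S'-1)}$, which your $\epsilon$ satisfies with equality, rather than the $\epsilon\leq 1/(k_{\max}-1)$ you quote.)
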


%\begin{proof}[Proof of Lemma~\ref{lem:large_value_pack2+}]
The proof of Lemma~\ref{lem:large_value_pack2+} follows from
Lemma~\ref{lem:good_exists2+} in the same way that the proof of 
Lemma~\ref{lem:large_value_pack} follows from
Lemma~\ref{lem:good_exists}.
%\end{proof}

%%%%%%%%%%% Upper bound  %%%%%%%%%%%

\section{Upper bound for the prices of anarchy}
\label{sec:further}
It is not difficult to obtain a simple upper bound for the price of
anarchy of the selfish hypercube bin packing game. Such a bound can be
obtained using the following result.

\begin{theorem}[Meir and Moser \cite{MeirM68}]\label{NewRef1} 
  Every set $S$ of $d$-hypercubes whose largest hypercube has side length
  $\ell \leq 1$ can be packed  into a unit bin if
  ${\rm Vol}(S) \leq \ell^d +(1-\ell)^d$, 
  where ${\rm Vol}(S)$ is the total volume of the hypercubes in $S$.
\end{theorem}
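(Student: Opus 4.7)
The plan is to proceed by induction on the number of cubes, combining a corner-placement strategy with a slab decomposition of the remainder of the bin.

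First I would sort the cubes in decreasing order of side length, writing $\ell=\ell_1\ge\ell_2\ge\cdots\ge\ell_n$, and place the largest cube in the corner $[0,\ell]^d$ of the unit bin. This consumes volume exactly $\ell^d$ out of the budget $\ell^d+(1-\ell)^d$, so the remaining cubes have total volume at most $(1-\ell)^d$ and each has side at most $\ell$.

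Next, I would decompose the L-shaped remainder $[0,1]^d\setminus[0,\ell]^d$ into $d$ disjoint axis-aligned slabs $R_1,\dots,R_d$ defined by
$$R_i=[0,\ell]^{i-1}\times[\ell,1]\times[0,1]^{d-i}.$$
These partition the remainder; each $R_i$ has exactly one dimension equal to $1-\ell$, with $i-1$ dimensions equal to $\ell$ and $d-i$ dimensions equal to $1$. To drive the recursion, I would strengthen the statement to the following inductive lemma for packing into an arbitrary axis-aligned box: if $B$ has side lengths $b_1,\dots,b_d$ and the cubes to be packed have largest side $a\le\min_i b_i$ and total volume at most $a^d+\prod_i(b_i-a)$, then they can be packed into $B$. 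Specializing to the unit cube recovers the target statement, and the quantity $\prod_i(b_i-a)$ is preserved nicely under the corner-plus-slabs step: after placing a cube of side $a$ at a corner and decomposing the complement as above, the leftover volume $\prod_i(b_i-a)$ splits across the $d$ sub-slabs in a way that the same lemma, applied recursively with the next cube's side as $a'$, can absorb.

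The main obstacle I expect is handling cubes whose side length exceeds the thin dimension $1-\ell$ of some slabs $R_i$: such a cube cannot fit inside those slabs and must either be placed spanning several of them or assigned to a slab whose thin dimension still accommodates it. The inductive invariant must be chosen so that these "oversized" remaining cubes are always placed in slabs with enough room, while the algebraic identity $a^d+\prod_i(b_i-a)$ is faithfully preserved across the nested recursion. Getting the exact form of this bound right---rather than a looser purely volumetric bound---is the heart of the argument, since it is precisely this form that mirrors the target bound $\ell^d+(1-\ell)^d$ at every level of the recursion.
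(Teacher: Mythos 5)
Note first that the paper does not prove this statement: Theorem~\ref{NewRef1} is a classical result of Meir and Moser~\cite{MeirM68}, cited and used as a black box in the proof of Proposition~\ref{thm:PoA_upbd}, so there is no in-paper proof to compare against; the proposal must stand on its own. As written it is a sketch, not a proof, and you say so yourself: the step you call ``the heart of the argument'' --- showing that the budget $a^d+\prod_i(b_i-a)$ for the box $b_1\times\cdots\times b_d$ splits across the $d$ sub-slabs $R_1,\dots,R_d$ in a way that the generalized lemma can consume recursively --- is precisely the step you leave unverified. The slabs have different shapes ($R_i$ has $i-1$ sides of length $a$, one of length $b_i-a$, and the rest of lengths $b_j$), so you must specify which of the remaining cubes is routed to which slab and what budget each slab inherits, and the product $\prod_i(b_i-a)$ admits no obvious additive decomposition matching that geometry. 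Incidentally, the particular obstacle you single out (a remaining cube of side $>1-\ell$) cannot occur at the top level: if $\ell\geq 1/2$, such a cube alone would have volume $>(1-\ell)^d$, exceeding the entire remaining budget, and if $\ell<1/2$ every remaining cube has side $\leq\ell<1-\ell$. So the real gap is not that a single cube might fail to fit, but that the inductive invariant and the allocation rule are unspecified; as stated, the recursion does not close.

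For reference, the classical Meir--Moser argument is organized quite differently from your corner-plus-L-shape recursion. It proceeds by induction on the dimension $d$ rather than on the number of cubes: the cubes are sorted in decreasing order, the box is sliced into layers perpendicular to a single fixed axis, the thickness of each layer is set equal to the side of the first cube placed into it, and the cross-sections of the cubes assigned to a layer are packed by invoking the $(d-1)$-dimensional statement. The volume bookkeeping is designed so that the bound $\ell^d+\prod_i(a_i-\ell)$ survives the passage to lower dimension, which is the analogue of the invariant your recursion would need but does not establish. If you want to pursue your route, the first thing to pin down is a rule for partitioning the remaining cubes (and the budget) among the $d$ slabs, together with a verification that each slab's assigned cubes satisfy the box-version hypothesis; absent that, the argument is incomplete.
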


\begin{proposition}
  \label{thm:PoA_upbd}
  For the prices of anarchy of the selfish $d$-hypercube bin
  packing game, we have 
  \begin{equation}
    \label{eq:PoA_upbd}
    \SPoA(d)\leq\PoA(d)\leq{2^d}.
  \end{equation}
\end{proposition}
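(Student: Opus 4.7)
The first inequality $\SPoA(d) \leq \PoA(d)$ is immediate, since every strong Nash equilibrium is in particular a Nash equilibrium, so the worst-case social cost of a strong Nash configuration is bounded by that of an ordinary Nash configuration. Hence the task reduces to proving $\PoA(d) \leq 2^d$.

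Fix a Nash configuration with bins $B_1, \ldots, B_N$ of occupied volumes $v_1, \ldots, v_N$. The plan rests on a single key claim: \emph{at most one bin satisfies $v_i \leq 2^{-d}$.} Granting the claim, $N-1$ of the bins contribute more than $2^{-d}$ each, so the total occupied volume $V=\sum_i v_i$ satisfies $V > (N-1)\cdot 2^{-d}$. Since the same items admit a packing into $\opt$ unit bins, we also have $V \leq \opt$, which combined with the previous inequality gives $N \leq 2^d\opt + 1$, whence the asymptotic bound $\PoA(d) \leq 2^d$.

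To prove the key claim, suppose for contradiction that two distinct bins satisfy $v_i \leq v_j \leq 2^{-d}$. Every item of $B_i \cup B_j$ then has volume at most $2^{-d}$ and so side length at most $1/2$. Pick an arbitrary item $h^* \in B_i$ (the bin is used, hence non-empty) and consider migrating it to $B_j$. The set $B_j \cup \{h^*\}$ has maximum side length at most $1/2$ and total volume at most $v_i + v_j \leq 2^{1-d}$. Because the function $\ell \mapsto \ell^d + (1-\ell)^d$ is minimised on $[0,1]$ at $\ell=1/2$ with value $2^{1-d}$, Theorem~\ref{NewRef1} ensures that $B_j \cup \{h^*\}$ fits into a single unit bin. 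Moreover, since $v_i \leq v_j < v_j + s(h^*)^d$, the cost of $h^*$ strictly decreases from $s(h^*)^d/v_i$ to $s(h^*)^d/(v_j + s(h^*)^d)$, contradicting the Nash property for~$h^*$.

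The main obstacle, accordingly, is to identify the correct threshold $2^{-d}$: it must be small enough that two sub-threshold bins still jointly satisfy the Meir--Moser volume condition $V \leq 2^{1-d}$, yet large enough to force all items in those bins to have side length at most $1/2$, which is precisely the worst-case value of $\ell$ for Meir--Moser. Once this balance is struck, applying Theorem~\ref{NewRef1} and verifying the strict cost decrease are both elementary, and the volume counting argument in the second paragraph yields the desired bound.
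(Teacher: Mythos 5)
Your proof is correct and follows essentially the same route as the paper's: both isolate the claim that at most one bin can be light (occupied volume at most $2^{-d}$), verify it by moving a single item from the lighter of two light bins into the heavier one and invoking the Meir--Moser theorem, and finish with a volume-counting argument. The only cosmetic difference is that you apply Meir--Moser directly to exhibit a feasible cost-decreasing move and then contradict the Nash property, while the paper argues via the contrapositive of Meir--Moser to derive a volume lower bound that contradicts the lightness assumption.
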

\begin{proof}
  The fact that~$\SPoA(d)\leq\PoA(d)$ follows directly from the
  definitions.  We therefore only address the second inequality
  in~\eqref{eq:PoA_upbd}. 
  Let $\cP=\{B_1,\ldots,B_k\}$ be a packing that is a Nash
  equilibrium, where each $B_i$ is a packing into a unit bin. The
  proof is based on volume arguments. We prove that each bin in $\cP$
  has volume occupation of at least $1/2^d$, except possibly for one bin.
  For simplicity, if $B$ is a packing, we denote by ${\rm Vol}(B)$ the total
  volume of the hypercubes in $B$, and by $H(B)$ the set of hypercubes
  in $B$.  Let $\cP'=\{B\in\cP:\,{\rm Vol}(B)<1/2^d\}$. We will prove
  that $|\cP'|$ is at most~$1$.  First, note that, if $B\in\cP'$ then,
  the side length of any hypercube in $B$ is smaller than $1/2$. The
  proof is by contradiction.  Suppose there are at least two distinct
  bins $B'$, $B''\in\cP'$ such that
  $ 1/2^d>{\rm Vol}(B')\geq {\rm Vol}(B'')$, and let $b\in
  H(B'')$.
  Since $\cP$ is a Nash equilibrium, $H(B')\cup \{b\}$ cannot be
  packed in only one bin. If
  $\ell=\max\{s(h):\,h\in H(B')\cup\{b\}\}$, by Theorem~\ref{NewRef1}
  we conclude that
  \begin{equation}
    {\rm Vol}(H(B')\cup\{b\}) > \ell^d+(1-\ell)^d.
  \end{equation}
  Thus, 
  \begin{eqnarray}
    {\rm Vol}(B') > (1-\ell)^d + \ell^d - {\rm Vol}(\{b\}) \geq (1-\ell)^d \geq 1/2^d,
  \end{eqnarray}
  a contradiction to the choice of $B'$. Therefore, there is at most
  one bin in $\cP'$ (and consequently at most one bin in $\cP$ with
  volume occupation smaller than $1/2^d$) and we can conclude that $|\cP| \leq  2^d
  \,{\rm Vol}(H(\cP))  + 1\leq  2^d\,\opt(H(\cP)) + 1$. This completes the proof.
\end{proof}

Exponentially better upper bounds than the one in
Proposition~\ref{thm:PoA_upbd} can be proved for~$\SPoA(d)$.  However,
we do not quite see how to close the exponential gap between our lower
and upper bounds (in fact the gap is doubly exponential in the case
of~$\SPoA(d)$).  We shall address these topics elsewhere.

\end{document}